\documentclass[
aps,
prxquantum,
twocolumn,
notitlepage,
superscriptaddress,
longbibliography
]{revtex4-2}

\usepackage[T1]{fontenc}
\usepackage{amsmath,amssymb,amsthm,amsfonts}
\usepackage{bm}
\usepackage{mathtools}
\usepackage{mathrsfs}
\usepackage{graphicx}
\usepackage{xcolor}
\usepackage{hyperref}

\newcommand{\rank}{\operatorname{rank}}

\theoremstyle{plain}
\newtheorem{theorem}{Theorem}[section]

\newtheorem{lemma}[theorem]{Lemma}
\newtheorem{proposition}[theorem]{Proposition}
\theoremstyle{definition}

\theoremstyle{remark}

\begin{document}

\title{Finite Realizations and Effective Memory in Monitored Nonlinear Quantum Dynamics}

\author{Jacob Emerson}
\affiliation{Department of Electrical and Computer Engineering, Princeton University}

\date{\today}

\begin{abstract}
Nonlinear dynamics generate non-Gaussian states and operations, but they also make continuously monitored quantum systems difficult to track. A quantum filter performs this task by compressing a noisy measurement record into a set of evolving variables that predict future observables. For Gaussian dynamics, this compression is exact and finite. In nonlinear systems, conventional moment equations generally expand without bound, but the failure of one representation does not prove that every finite description is impossible.
Here we formulate quantum filtering as a realization problem for general quantum input--output maps. We introduce the linearized observable Hankel operator \(K_O\), which maps changes in the past measurement record to changes in a future conditional observable. Its rank provides a coordinate-independent test of whether finite compression is possible. Applied to single-mode polynomial bosonic systems under continuous quadrature monitoring, this test yields a sharp boundary: Gaussian dynamics and the measurement-aligned Conditional Momentum Moment class admit finite realizations, whereas dynamics outside these classes produce infinitely many response directions and admit no finite-dimensional smooth exact filter.
Exact impossibility, however, need not imply large practical complexity. The singular values of \(K_O\) quantify how many response directions matter at a chosen local accuracy, defining the effective memory of an observable. For the Kerr and Duffing oscillators, these singular values decay rapidly despite infinite exact rank, revealing strong local compressibility while showing substantial room to improve existing global filters. The framework therefore separates exact realizability from effective complexity in monitored nonlinear quantum dynamics.
\end{abstract}

\maketitle

\section{Introduction}
\label{sec:introduction}

Continuous quantum measurement has become a standard method of readout in quantum optical and superconducting systems. By monitoring part of the field emitted into the environment, homodyne or heterodyne detection produces a time-dependent record of an evolving quantum system \cite{belavkin1992,wiseman_milburn2009,jacobs2014}. Such records are now used to reconstruct quantum trajectories, stabilize states through feedback, and even anticipate and reverse individual quantum jumps \cite{murch2013observing,vijay2012feedback,weber2014optimal,minev2019jump}. Continuous measurement thus provides information throughout an experiment, while its noise and backaction also drive the dynamics conditioned on that information.

Averaging over measurement records recovers the deterministic Lindblad evolution, which is sufficient for many purposes, but loses information about individual runs. Keeping the record allows observables to be estimated conditionally and used for feedback. In mechanical systems, such estimates have enabled feedback control and helped distinguish the variance of a conditional state from the broader variation seen across runs \cite{rossi2018control,rossi2019trajectory}. In a different setting, the intrinsic dimension of quantum trajectory data has been shown to reflect dynamical structure, including integrability, Hilbert-space fragmentation, and closure of correlation hierarchies \cite{lumia2026complexity}.

Quantum reservoir computing offers another reason to study dynamics under measurement. In these proposals, a time-dependent input drives a quantum system, and measured outputs are used to process the input history. Nonlinear bosonic reservoirs and protocols using weak measurements have been proposed for temporal tasks \cite{govia2021reservoir,khan2021physical,mujal2023timeseries}. Measurement backaction can itself contribute to the reservoir's processing and memory \cite{franceschetto2026backaction}. These results make the relation between measurement, dynamics, and retained temporal information a concrete question.

A quantum filter uses the measurement record up to the present time to estimate an observable, a collection of correlations, or the conditional quantum state \cite{belavkin1992,wiseman_milburn2009,jacobs2014}. Monitoring one field quadrature, such as \(X\), can thereby provide an estimate of a quantity that is not measured directly. For feedback, recentering, or a trained readout, the desired observables may suffice without reconstructing every component of the conditional state \cite{khan2024postprocessor}. A finite-dimensional filter would produce these estimates by evolving a fixed number of internal variables as the record grows.

For Gaussian states under Gaussian-preserving dynamics, the conditional state is determined by its first and second moments \cite{weedbrook2012gaussian}. Nonlinear Hamiltonians generally destroy this closure, but they also generate physics beyond the Gaussian manifold, including Wigner negativity, Schrödinger-cat superpositions, collapse and revival, and resources for universal continuous-variable quantum computation \cite{kudra2022wigner,kirchmair2013collapse,lloydbraunstein1999}. Under continuous monitoring, the evolution of an observable generally couples it to additional correlations. For example, \(X\)-homodyne monitoring can schematically generate
\[
\langle O\rangle_c
\longrightarrow
\langle OX\rangle_c
\longrightarrow
\langle OX^2\rangle_c
\longrightarrow\cdots ,
\]
while the Hamiltonian introduces further couplings. In a direct moment description, the number of required variables appears to grow without bound.

An unbounded moment hierarchy does not, by itself, rule out a finite filter: a different choice of variables might close. The Conditional Momentum Moment (CMM) construction demonstrates this possibility for a class of measurement-aligned nonlinear Hamiltonians \cite{emersonCMM}. Other finite structures arise from linear Heisenberg dynamics, invariant families of conditional states, and estimation algebras \cite{warszawskiwisemandoherty2020,sarletteelouardrouchon2025,mitter1979,brockett1981,chiouyau1994,aminigough2019,grigolettoPellegriniTicozzi2025}. These examples leave a general question: given a conditional observable, can any finite-dimensional causal filter reproduce its dependence on the measurement record?

We approach this question as an observable-level realization problem where the input is the past measurement record, and the output is a future conditional observable. Perturbing the record and calculating the resulting first-order change in the output defines the linearized observable Hankel operator \(K_O\). If a smooth filter has \(d\) internal variables, its response to every past perturbation must pass through those variables, so \(K_O\) has rank at most \(d\). Infinite rank therefore rules out a finite smooth realization of that observable, regardless of the variables used to construct the filter. It also rules out a finite realization of the full conditional state.

The rank test adapts the past--future perspective of classical input--output realization theory to differentiable quantum input--output maps \cite{fliess1974,sussmann1976,arbibmanes1980,grayScherpen}. For continuously monitored single-mode polynomial dynamics, it identifies finite realizations in the Gaussian and measurement-aligned CMM classes and establishes infinite rank in the cases covered by our obstruction. We then examine how these results extend to rotated homodyne and heterodyne monitoring, inefficient detection, thermal noise, and finite multimode embeddings.

The exact boundary leaves a practical question: how many variables are needed when some approximation error is acceptable? The singular values of \(K_O\) give a lower bound on the error of a fixed-dimensional approximation to the local observable response \cite{scherpen1993,fujimotoscherpen2010,eckartyoung1936}. They provide a benchmark for the Kerr and Duffing comparisons that follow.

The paper is organized as follows. Sections~\ref{sec:background}--\ref{sec:realization} introduce the monitored bosonic model, its finite closures, and the observable realization framework. Section~\ref{sec:classification} establishes the boundary of exact realizability. Sections~\ref{sec:approximation}--\ref{sec:numerics} develop finite-accuracy approximation and present the Kerr and Duffing results. Technical proofs and detailed constructions appear in the Supplemental Material \cite{supplemental}.

\section{Monitored bosonic dynamics and finite closures}
\label{sec:background}

We begin with the concrete setting: a single bosonic mode with polynomial Hamiltonian dynamics under continuous quadrature monitoring. After fixing the physical model, we examine Gaussian and cubic-phase dynamics as representative examples of the two finite sectors of the classification. Their observable response kernels make the basic rank argument explicit before we formulate it for general quantum input--output maps.

\subsection{Monitored bosonic dynamics}
\label{sec:model}
The bosonic system considered here has mode quadratures satisfying
\begin{equation}
 [X,P]=i,
 \qquad
 a=\frac{X+iP}{\sqrt2}.
\end{equation}
The input record is the homodyne or heterodyne measurement current, and the outputs are
conditional expectations of system observables.  Since all expectations and states below
are conditioned on the observed record, we suppress an additional
conditional subscript.

Throughout the main classification, the Hamiltonian is the Weyl ordering of a
finite real polynomial,
\begin{equation}
 h(x,p)=\sum_{\ell,k}h_{\ell k}x^\ell p^k,
 \qquad
 H=[h]_{\mathrm W}.
 \label{eq:main-weyl-H}
\end{equation}
For monomials, Weyl ordering is symmetrization:
\([x^\ell p^k]_{\mathrm W}=\operatorname{Sym}(X^\ell P^k)\), where
\(\operatorname{Sym}(X^\ell P^k)\) is the average over all distinct orderings
of \(\ell\) copies of \(X\) and \(k\) copies of \(P\).  Thus real
phase-space polynomials are represented by Hermitian polynomial operators.
This convention also pairs naturally with the Wigner representation, where Hamiltonians and observables are tracked through their ordinary
phase-space symbols \cite{folland1989}.

Damping into a thermal Markov bath of occupation \(\bar n\), together with
homodyne monitoring of the loss channel, gives the stochastic master equation
\begin{align}
 d\rho_t
 &={\cal L}_{\bar n}\rho_t\,dt
   +\sqrt{\eta\kappa}\,{\cal H}[a]\rho_t\,dW_t,
 \label{eq:main-SME}\\
 dY_t
 &=\sqrt{2\eta\kappa}\,\langle X\rangle_t\,dt+dW_t,
 \qquad
 \langle O\rangle_t=\operatorname{Tr}(O\rho_t),
 \label{eq:main-current}
\end{align}
where
\begin{align}
 {\cal L}_{\bar n}\rho
 &=-i[H,\rho]
 +\kappa(\bar n+1){\cal D}[a]\rho
 +\kappa\bar n{\cal D}[a^\dagger]\rho,\\
 {\cal D}[c]\rho
 &=c\rho c^\dagger-\tfrac12\{c^\dagger c,\rho\},\\
 {\cal H}[a]\rho
 &=a\rho+\rho a^\dagger
 -\operatorname{Tr}[(a+a^\dagger)\rho]\rho.
\end{align}
Here \(\kappa>0\) is the damping rate, \(0<\eta\le1\) is the detection
efficiency, and \(\bar n\ge0\) is the bath occupation. 

The realization argument concerns the map from the observed record to the
observable trajectory.  For this purpose, it is useful to remove the
normalization nonlinearity from the evolution and restore it only at the
readout.  Writing
\[
 \rho_t=\frac{\widetilde\rho_t}{\operatorname{Tr}\widetilde\rho_t},
\]
the unnormalized conditional state obeys the linear equation
\begin{equation}
 d\widetilde\rho_t
 ={\cal L}_{\bar n}\widetilde\rho_t\,dt
 +{\cal M}_\eta\widetilde\rho_t\,dY_t,
 \qquad
 {\cal M}_\eta\rho
 =\sqrt{\eta\kappa}(a\rho+\rho a^\dagger).
 \label{eq:main-zakai}
\end{equation}
The physical conditional expectation is recovered by normalizing the output
\begin{equation}
 \langle O\rangle_t
 =
 \frac{\operatorname{Tr}(O\widetilde\rho_t)}
 {\operatorname{Tr}(\widetilde\rho_t)}.
 \label{eq:main-normalized-output}
\end{equation}
Thus the record-driven evolution is linear before the final scalar
normalization.

We will also pass to the Wigner representation, the quantum analogue of classical
phase space. Here \(x\) and \(p\) are ordinary
phase-space variables corresponding to the quadratures \(X\) and \(P\).  If
\(W_t(x,p)\) is the Wigner transform of the unnormalized state
\(\widetilde\rho_t\), then
\begin{equation}
 dW_t=({\cal H}_h+{\cal D}_{\bar n})W_t\,dt
       +{\cal M}_\eta W_t\,dY_t,
 \label{eq:main-wigner-zakai}
\end{equation}
with
\begin{align}
 {\cal D}_{\bar n}W
 &=\frac\kappa2[\partial_x(xW)+\partial_p(pW)]
 +\frac{\kappa(2\bar n+1)}4
  (\partial_x^2+\partial_p^2)W,\\
 {\cal M}_\eta W
 &=\sqrt{2\eta\kappa}
   \left(x+\tfrac12\partial_x\right)W.
 \label{eq:main-measurement-letter}
\end{align}
The Hamiltonian term \({\cal H}_h\) is the exact Moyal commutator generated by
\(H=[h]_{\mathrm W}\).  Its odd-derivative expansion terminates because \(h\)
is a polynomial.

\subsection{The finite classes: Gaussian and CMM}
\label{sec:finite-classes-background}

We introduce the finite structures that form the solvable side of the
classification. We state the one-mode \(X\)-homodyne class to fix notation,
while Fig.~\ref{fig:main-hamiltonian-classes} summarizes the
boundary for \(P\)-homodyne monitoring as well.  The rotated and multimode versions are discussed with
the main theorem and in the Supplemental Material \cite{supplemental}.

The Gaussian class consists of Hamiltonians at most quadratic in the
quadratures,
\begin{equation}
 H\in{\mathscr G}
 \qquad\Longleftrightarrow\qquad
 H=[h]_{\mathrm W},\quad \deg h\le 2.
 \label{eq:main-gaussian-class}
\end{equation}
For these systems, continuous Gaussian measurements preserve Gaussian
conditional states, so the filter closes on finitely many means and
covariances.  This includes the usual homodyne and heterodyne Gaussian
filters and related finite-dimensional closure constructions
\cite{warszawskiwisemandoherty2020}.

The second finite class is the Conditional Momentum Moment class introduced
in Ref.~\cite{emersonCMM}.  For homodyne monitoring of \(X\), it consists of
Hamiltonians of the form
\begin{equation}
 H\in{\mathscr C}_X
 \qquad\Longleftrightarrow\qquad
 H=V(X)+aP+b\,\operatorname{Sym}(XP),
 \label{eq:main-CMM-class}
\end{equation}
with \(V\) a polynomial and \(a,b\in\mathbb R\).  The useful coordinates are
not the full Wigner function, but the conditional moments of the conjugate
quadrature at fixed measured quadrature,
\begin{equation}
 m_n(x,t)
 =
 \frac{\int_{\mathbb R} p^n W_t(x,p)\,dp}
      {\int_{\mathbb R} W_t(x,p)\,dp}.
 \label{eq:main-CMM-moments}
\end{equation}
For Hamiltonians in \({\mathscr C}_X\), these quantities obey a triangular
system: the evolution of \(m_n\) involves only lower momentum moments and
polynomial functions of \(x\).  For any fixed observable degree, the required
moments therefore close on finitely many polynomial coefficients, from which
mixed Weyl-ordered observables can also be recovered.

The CMM class does not fully contain the Gaussian class.  For example,
\(X^2+P^2\) is Gaussian but does not belong to \({\mathscr C}_X\), because it
contains quadratic dependence on the conjugate quadrature.  Instead, CMM
extends the finite sector in a different direction by allowing arbitrary
polynomial dependence on the measured quadrature, including nonlinearities like cubic-phase
dynamics \(H\propto X^3\).  Its existence is therefore structurally important:
it gives a genuinely nonlinear exact closure and shows why the failure of a
standard moment hierarchy cannot by itself rule out a finite filter.

The closure is also constructive rather than purely formal.  In
Ref.~\cite{emersonCMM}, it was used to obtain a causal real-time estimator for
the nonlinear nullifier of a squeezed cubic-phase state and to simulate
measurement-based feedback recentering without propagating the full
conditional state.  Together, these two closures form the finite side of the one-mode
\(X\)-homodyne classification,
\begin{equation}
 {\mathscr F}_X={\mathscr G}\cup{\mathscr C}_X.
 \label{eq:main-finite-class}
\end{equation}

\section{Observable realization theory}
\label{sec:realization}

We now introduce the coordinate-independent realization framework used throughout the paper. This formulation allows us to import tools from classical input--output theory and test for the existence of finite-dimensional quantum filters without specifying their internal coordinates.

\subsection{Finite realizations and the observable Hankel operator}
\label{sec:KO}

A causal filter converts an observed record into the trajectory of the
quantities one wants to predict.  A \(d\)-dimensional finite-dimensional filter represents all
record information needed for those outputs by internal coordinates
\begin{equation}
 dz_t=f(z_t,t)\,dt+g(z_t,t)\,dY_t,
 \qquad
 \widehat O_t=q(z_t,t),
 \label{eq:abstract-filter}
\end{equation}
where \(z_t\in\mathbb R^d\). Causality means that \(z_t\), and hence
\(\widehat O_t\), may depend only on the record up to time \(t\).
We consider \(C^1\) realizations that remain exact on an open neighborhood
of input records, so that their response to record perturbations is
differentiable. The coordinates may be moments, Fock coefficients,
Gaussian parameters, or nonlinear variables with no prescribed physical
meaning.

We now define the linearized input--output operator used throughout the
classification.  Fix a final time \(T\), a splitting time \(\tau\), and write
\begin{equation}
 I_-=[0,\tau],
 \qquad
 I_+=[\tau,T].
 \label{eq:main-past-future-intervals}
\end{equation}
The interval \(I_-\) represents the past record, while \(I_+\) represents the
future output window.  For a chosen reference input \(u\), the past-to-future
observable map is
\begin{equation}
 {\mathscr H}_O:
 u|_{I_-}
 \longmapsto
 \left.
 \langle O\rangle_t
 \right|_{I_+}.
 \label{eq:main-past-future-map}
\end{equation}
We study this map locally around a reference input \(u\).  Because its input
and output are functions of time, the relevant derivative is the
Fr\'echet derivative, the infinite-dimensional analogue of a Jacobian:
it is the linear map that gives the first-order change in the future
observable trajectory caused by a small perturbation of the past input.  Thus,
for a perturbation \(v\) supported in \(I_-\),
\begin{equation}
 {\mathscr H}_O(u+\epsilon v)(t)
 =
 {\mathscr H}_O(u)(t)
 +\epsilon\,(K_O v)(t)
 +o(\epsilon),
 \qquad t\in I_+.
 \label{eq:main-frechet-definition}
\end{equation}
The linear operator
\begin{equation}
 K_O=D{\mathscr H}_O(u):L^2(I_-)\longrightarrow L^2(I_+)
 \label{eq:main-KO}
\end{equation}
is the observable Hankel operator. For clarity, we also call \(K_O\) the
linearized response operator, or simply the response operator when the
meaning is clear.

\begin{theorem}[Observable realization obstruction]
\label{thm:general-rank-obstruction}
If a differentiable quantum input--output map for an observable \(O\) admits a
robust \(d\)-dimensional \(C^1\) filter, then
\begin{equation}
 \rank K_O\le d.
 \label{eq:main-rank-bound}
\end{equation}
\end{theorem}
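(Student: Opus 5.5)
The plan is to factor the past-to-future map ${\mathscr H}_O$ through the $d$-dimensional internal state at the splitting time $\tau$, and then differentiate with the chain rule. Suppose the filter \eqref{eq:abstract-filter} reproduces $\langle O\rangle_t$ exactly for all inputs in an open neighborhood of the reference record $u$. Causality and the flow property of \eqref{eq:abstract-filter} give a decomposition of ${\mathscr H}_O$ into two maps. The first sends the past record to the internal state: $\Phi: u|_{I_-}\mapsto z_\tau\in\mathbb R^d$. The second sends that state to the future output: $\Psi: z_\tau\mapsto q(z_t,t)|_{I_+}$, where $z_t$ for $t\ge\tau$ solves \eqref{eq:abstract-filter} from $z_\tau$ driven by the reference future input $u|_{I_+}$. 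Since $K_O$ only involves perturbations $v$ supported in $I_-$, the future input is held fixed. Hence, on the neighborhood,
\begin{equation}
 {\mathscr H}_O=\Psi\circ\Phi .
\end{equation}

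The next step is to check that $\Phi$ and $\Psi$ are Fréchet differentiable, respectively from $L^2(I_-)$ to $\mathbb R^d$ and from $\mathbb R^d$ to $L^2(I_+)$. This is the meaning I assign to a ``robust $C^1$ realization''. For $\Psi$ it is standard differentiable dependence of ODE solutions on initial data, combined with $q\in C^1$. For $\Phi$ it is differentiable dependence on the driving input, with the linearization
\[
 d\,\delta z=\partial_z f\,\delta z\,dt+\partial_z g\,\delta z\,dY_t+g\,v\,dt
\]
solved on $I_-$. Granting this, the chain rule gives
\[
 K_O=D\Psi(z_\tau)\circ D\Phi(u|_{I_-}).
\]
Here $D\Phi$ has range in $\mathbb R^d$, so $\rank K_O\le\rank D\Phi\le d$. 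This is \eqref{eq:main-rank-bound}. The same bound holds if the filter's state at $\tau$ has an output map that is not injective; all that matters is that the state lives in $\mathbb R^d$.

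The main obstacle is making the differentiability of $\Phi$ rigorous when the input is a measurement record rather than a smooth control. As an equation in $dY_t$, \eqref{eq:abstract-filter} is stochastic, and the input-to-state map need not be continuous in $L^2$. I would handle this in the way the surrounding framework implicitly does. I would treat $u$ as a deterministic input of the form $dY_t=u(t)\,dt$ in the unnormalized (Zakai/robust) form, so the filter becomes a controlled ODE with $C^1$ vector fields. The hypothesis ``exact on an open neighborhood of input records, $C^1$'' is then read as the statement that $\Phi$ is $C^1$ on that neighborhood. If a genuinely pathwise formulation were required, I would instead invoke rough-path or Doss–Sussmann continuity. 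The linear-algebraic core of the argument, factorization through $\mathbb R^d$, is unchanged in either case.
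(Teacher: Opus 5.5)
Your proposal is correct and follows the paper's own argument: factor ${\mathscr H}_O={\mathscr G}_O\circ{\mathscr R}_\tau$ through the $d$-dimensional filter state at the split, apply the chain rule, and conclude that $K_O$ passes through $\mathbb R^d$. Your treatment of differentiability, with the record taken as a controlled input and the $C^1$ hypothesis imposed on the input-to-state map, matches how the paper defines robustness: continuous extension to controlled records and differentiability along smooth perturbations.
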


To see the factorization, let \({\mathscr R}_\tau\) map the past record to the
filter state at the splitting time and let \({\mathscr G}_O\) map that state to
the future observable.  Then
\begin{equation}
 {\mathscr H}_O={\mathscr G}_O\circ{\mathscr R}_\tau,
 \qquad
 K_O=D{\mathscr G}_O({\mathscr R}_\tau(u))
 \circ D{\mathscr R}_\tau(u).
 \label{eq:main-filter-factorization}
\end{equation}
Because \({\mathscr R}_\tau\) maps into a \(d\)-dimensional filter state,
the composition defining \(K_O\) can contain at most \(d\) independent
response directions.  This proves Eq.~\eqref{eq:main-rank-bound} and makes the
bound independent of the coordinates used for the filter.

The one-sided nature of this criterion is well suited to the classification
problem considered here. On the finite side, the Gaussian and CMM cases
already provide explicit realizations; for the remaining cases, it is enough
to find an obstruction. Thus, once \(K_O\) is shown to have infinite rank
outside those classes, the constructive and obstructive results together give
the full boundary.

Passing from the nonlinear map \({\mathscr H}_O\) to its linearization is a
deliberate simplification. Directly characterizing how every possible record
affects the future observable would generally be intractable, whereas any
smooth finite realization must also give a finite-dimensional realization of
its linearized response. A single infinite-rank linearization can therefore
rule out the full nonlinear filter. Even \(K_O\) need not be computed
completely: it is enough to identify arbitrarily many independent response
directions. This observable-level test is also stronger than a state-level
calculation in one useful respect: infinite rank for one scalar observable
excludes an exact finite-dimensional state filter, while a chosen observable
may still close on finitely many coordinates when the full conditional state
does not. The CMM class provides precisely such an observable-level closure.

\subsection{The bosonic response kernel}
\label{sec:bosonic-response-kernel}

For the monitored bosonic model, we replace the record increment \(dY_t\) by
a smooth input \(u(t)\,dt\).  In this controlled, pathwise form the Wigner
function evolves by
\begin{equation}
 \begin{aligned}
 \partial_tW_u&=A_u(t)W_u,\\
 A_u(t)&={\cal H}_h+{\cal D}_{\bar n}
 -\tfrac12{\cal M}_\eta^2
 +u(t){\cal M}_\eta.
 \end{aligned}
 \label{eq:main-controlled-zakai}
\end{equation}
The term \(-{\cal M}_\eta^2/2\) is the correction needed when passing from the
It\^o equation to the smooth-record response equation.  This converts the
record dependence into an ordinary linear evolution in time and permits
differentiation with respect to the input.  The passage from this controlled
calculation to physical measurement records is justified in the Supplemental
Material \cite{supplemental}.

Let \(\Phi_u(t,s)\) denote the propagator generated by
Eq.~\eqref{eq:main-controlled-zakai}; thus
\[
 W_u(t)=\Phi_u(t,s)W_u(s),
 \qquad s\le t.
\]
We use the phase-space pairing notation
\begin{equation}
 \langle f,W\rangle
 :=
 \int_{\mathbb R^2} f(x,p)W(x,p)\,dx\,dp.
 \label{eq:main-phase-space-pairing}
\end{equation}
With our Wigner normalization, the conditional expectation of an observable
\(O\) is
\begin{equation}
 \langle O\rangle_t
 =
 \frac{\langle o,W_u(t)\rangle}{\langle 1,W_u(t)\rangle},
 \label{eq:main-observable-pairing}
\end{equation}
where \(o(x,p)\) is the phase-space representative of \(O\).

Duhamel's formula \cite{pazy1983} gives the corresponding first variation of the
unnormalized Wigner function.  A perturbation inserted at time \(s\) produces
\begin{equation}
 \frac{\delta W_u(t)}{\delta u(s)}
 =
 \Phi_u(t,s){\cal M}_\eta W_u(s),
 \qquad s<t.
 \label{eq:main-state-response}
\end{equation}
After applying the normalized readout
\eqref{eq:main-observable-pairing}, the Hankel operator is the integral
operator
\begin{equation}
 (K_O v)(t)
 =
 \int_{I_-}K_O(t,s)v(s)\,ds,
 \qquad t\in I_+,
 \label{eq:main-hankel-integral}
\end{equation}
with kernel
\begin{equation}
 \begin{aligned}
 K_O(t,s)
 &=\frac{\left\langle
 o-\langle O\rangle_t,\,
 \Phi_u(t,s){\cal M}_\eta W_u(s)
 \right\rangle}{\langle1,W_u(t)\rangle}.
 \end{aligned}
 \label{eq:main-memory-kernel}
\end{equation}
Here $s\in I_-$ and $t\in I_+$.
The centered factor \(o-\langle O\rangle_t\) comes from differentiating the
normalization in Eq.~\eqref{eq:main-observable-pairing}.

The operator \(K_O\) measures the first-order influence of the past record on
the future observable trajectory.  Its input is a small perturbation of the
past measurement record, and its output is the induced change in the future
conditional expectation.

\subsection{A finite nonlinear calibration}
\label{sec:finite-side-calibration}

The same calculation shows how aligned CMM dynamics stop instead of producing
an increasing response chain.  Consider the setting and desired observable
\begin{equation}
 H=\gamma X^3,
 \qquad O=P.
\end{equation}
In ordinary moment coordinates, this system does not appear finite.
The stochastic \(X\)-measurement term generates the schematic hierarchy
\[
\langle P\rangle_t
\longrightarrow
\langle XP\rangle_t
\longrightarrow
\langle X^2P\rangle_t
\longrightarrow\cdots ,
\]
so increasingly many moments enter the evolution of
\(\langle P\rangle_t\). This unbounded hierarchy suggests that an exact
filter would require infinitely many variables. The CMM transformation does
not truncate the hierarchy. Instead, it packages all of these
\(X\)-weighted moments into the conditional momentum moment
\[
m_1(x,t)
=
\frac{\int p\,W_c(x,p,t)\,dp}
     {\int W_c(x,p,t)\,dp},
\]
and then treats its dependence on the measured coordinate \(x\) as a single
evolving function.

For the cubic-phase Hamiltonian, the Hamiltonian force is proportional to
\(x^2\partial_p\). It increases the polynomial degree in \(x\), but the
absence of a kinetic term prevents this dependence from being transferred
into higher momentum degree. Writing the result in the centered coordinate
\(\xi=x-\langle X\rangle_t\), the first conditional momentum moment therefore
closes as the quadratic polynomial
\begin{equation}
 \begin{aligned}
 m_1(\xi,t)&=a_0(t)+a_1(t)\xi+a_2(t)\xi^2,\\
 \langle P\rangle_t&=a_0(t)+\frac12a_2(t),
 \end{aligned}
 \label{eq:main-cubic-CMM-polynomial}
\end{equation}
where the second line follows by averaging \(m_1\) over the Gaussian
\(X\)-marginal. For the standard coherent-width measured
marginal, the three coefficient equations are
\begin{align}
 da_2
 &=\left(-3\gamma-\frac{3\kappa}{2}a_2\right)dt,
 \label{eq:main-cmm-a2}\\
 da_1
 &=-\kappa a_1\,dt
 +\sqrt{2\eta\kappa}\,a_2\,dW_t,
 \label{eq:main-cmm-a1}\\
 da_0
 &=\left(-\frac{\kappa}{2}a_0+\frac{\kappa}{2}a_2\right)dt
 +\sqrt{\frac{\eta\kappa}{2}}\,a_1\,dW_t.
 \label{eq:main-cmm-a0}
\end{align}
The first equation is deterministic, and $dW_t$ and $dY_t$ are related by a
causal invertible change of input coordinates.  Only $a_1$ and $a_0$ carry
variations of the measurement record, so the observable response can have at
most two independent record directions even though the conditional state need
not be Gaussian or finite dimensional.

The two directions can be seen directly in the kernel.  Replace $dW_t$ by a
smooth input $u(t)dt$, linearize around a constant $u_0$, and write
$\Delta=t-s$.  Solving the two record-dependent linear equations gives
\begin{equation}
 \begin{aligned}
 K_P(t,s)
 &=\sqrt{\frac{\eta\kappa}{2}}\,
 a_1(s)e^{-\kappa\Delta/2}\\
 &\quad+2\eta u_0a_2(s)
 \left(e^{-\kappa\Delta/2}-e^{-\kappa\Delta}\right).
 \end{aligned}
 \label{eq:main-cmm-explicit-kernel}
\end{equation}
After separating the \(t\)- and \(s\)-dependence, the only future factors are
\(e^{-\kappa t/2}\) and \(e^{-\kappa t}\), so the kernel has rank at most
two. For \(\gamma u_0\ne0\), the corresponding past factors are generically
independent, and the rank is exactly two. This agrees with the structure of
the CMM closure: \(a_2\) evolves deterministically, while all dependence on
the measurement record passes through \(a_1\) and \(a_0\). These two
record-dependent coordinates are therefore sufficient, and the rank of
\(K_P\) shows that they are also locally necessary. Although the cubic-phase
Hamiltonian is nonlinear and generates an infinite hierarchy of ordinary
moments, its lack of quadrature mixing prevents these two response directions
from developing into an unbounded chain. This example also shows how the internal response directions of the actual
filter appear directly in \(K_P\), allowing the Hankel factorization to
provide structural guidance as well as an obstruction.

\section{The boundary of exact realizability}
\label{sec:classification}
With the finite classes defined, we can state the exact boundary.  The result
says that outside
${\mathscr F}_X={\mathscr G}\cup{\mathscr C}_X$, even a single polynomial
observable has infinite past-to-future response rank.

\begin{figure}[t]
\centering
\includegraphics[width=\columnwidth]{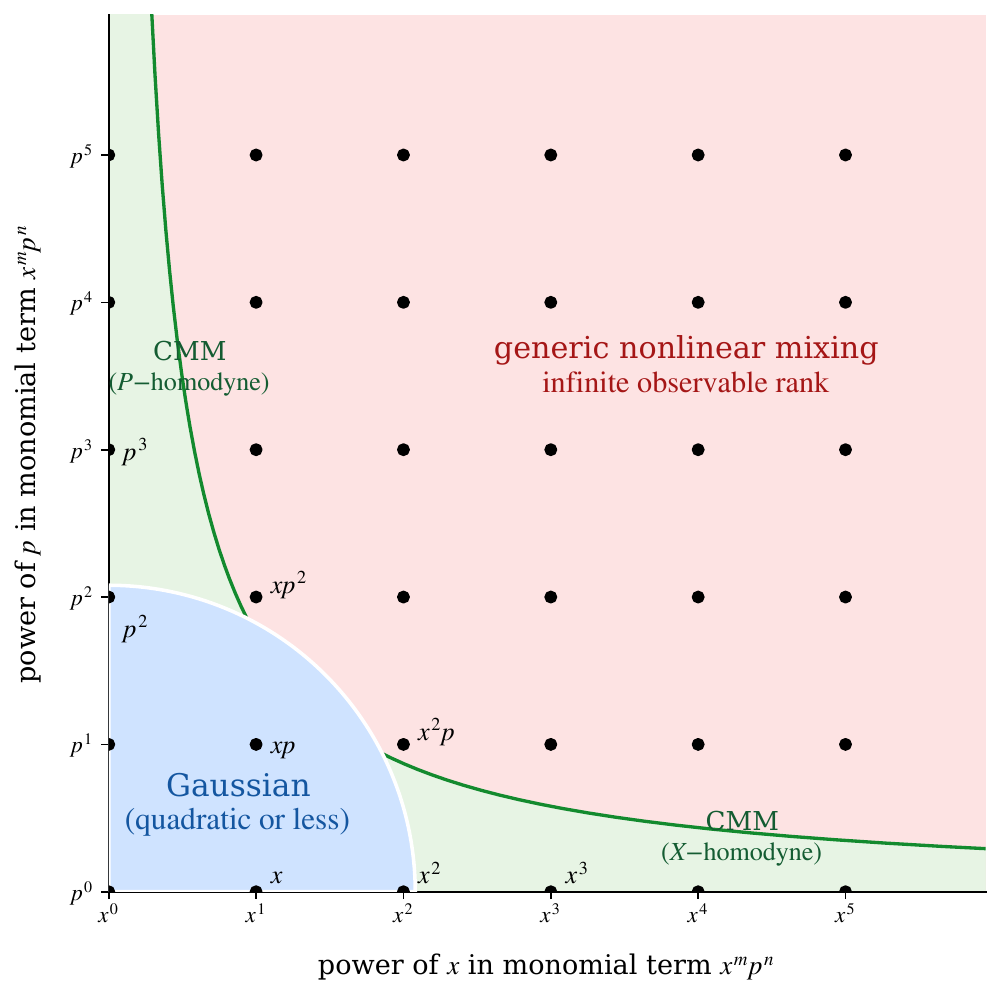}
\caption{
Single-mode schematic of the finite-rank and infinite-rank Hamiltonian sectors.
Each lattice point \((\ell,k)\) represents a Weyl-ordered monomial
\(\operatorname{Sym}(X^\ell P^k)\) appearing in the Hamiltonian.  The finite
sectors shown are the Gaussian class and the CMM classes aligned with
\(X\)- and \(P\)-homodyne monitoring.  Monomials outside these sectors mix the
measured and conjugate quadratures in a way that produces infinite observable
Hankel rank in the classification theorem.
}
\label{fig:main-hamiltonian-classes}
\end{figure}

\begin{theorem}[One-mode polynomial classification]
\label{thm:main-classification}
Fix $\kappa>0$, $0<\eta\le1$, $\bar n\ge0$, and $X$-homodyne
monitoring.  Let $H=[h]_{\mathrm W}$ be a time-independent finite real
Weyl-polynomial Hamiltonian with $H\notin{\mathscr F}_X$.  For every
nonconstant Weyl monomial observable
\begin{equation}
 O_{m,n}=\operatorname{Sym}(X^mP^n),
 \qquad(m,n)\ne(0,0),
\end{equation}
there is a constant controlled reference input such that, for almost every
Gaussian initial preparation,
\begin{equation}
 \boxed{\rank K_{O_{m,n}}=\infty.}
\end{equation}
Consequently no finite-dimensional robust $C^1$ filter can reproduce the
exact record-to-observable map for $\langle O_{m,n}\rangle_t$ in a
neighborhood of that reference input.
\end{theorem}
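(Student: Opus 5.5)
We reduce the infinite-rank claim to an algebraic statement about iterated brackets of the controlled generator, and then close that statement by a degree-counting argument in the phase-space polynomial algebra.

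\emph{Step 1 (Taylor expansion of the kernel).} Fix a constant reference input $u_0$. Then $A_{u_0}$ in Eq.~\eqref{eq:main-controlled-zakai} is time independent, and $\Phi_{u_0}(t,s)=e^{(t-s)A}$ with $A={\cal H}_h+{\cal D}_{\bar n}-\tfrac12{\cal M}_\eta^2+u_0{\cal M}_\eta$. Write $W_s=e^{sA}W_0$, where $W_0$ is Gaussian. The kernel \eqref{eq:main-memory-kernel} becomes a ratio of pairings, and its unnormalized numerator is
\[
N(t,s)=\langle (A^\dagger)^{t-s}\, o , {\cal M}_\eta e^{sA}W_0\rangle ,
\]
understood through the semigroup. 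The normalization factors $\langle 1,W_u(t)\rangle$ and the centering by $\langle O\rangle_t$ enter only as finitely many rank-one corrections, multiplied by nonvanishing functions of $t$. They can therefore change the rank by at most a finite amount, and do not affect whether it is infinite.

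Analyticity in $(t,s)$ holds for Gaussian $W_0$ because the polynomial generator acts on the Gaussian-times-polynomial class. Expanding around a point $(t_0,s_0)$ gives a standard Hankel-type criterion: $\rank K_O\ge \rank$ of the matrix with entries
\[
\big\langle (A^\dagger)^j o,\ A^{k}{\cal M}_\eta\, W_{s_0}\big\rangle, \qquad j,k\ge 0 .
\]
This is analogous to Markov parameters, using derivatives of the future and past factors. So it suffices to show that the dual future span $\mathcal O=\operatorname{span}\{(A^\dagger)^j o\}$ and the past reachable span $\mathcal R=\operatorname{span}\{A^k\mathcal M_\eta W\}$ pair with unbounded rank.

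\emph{Step 2 (growth of the future span).} Work with polynomial symbols. $A^\dagger$ is a differential operator: the Moyal bracket with $h$, the adjoint Ornstein--Uhlenbeck term, the multiplication/derivation $\sqrt{2\eta\kappa}(x-\tfrac12\partial_x)$, and its square. We want a grading in which $(A^\dagger)^j o$ has strictly increasing leading degree whenever $H\notin\mathscr F_X$. Two cases arise.

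\begin{itemize}
\item[(i)] $h$ has a monomial $x^\ell p^k$ with $k\ge2$ and $\ell+k\ge3$. The Poisson part $\{h,\cdot\}$ then raises total degree by $\ell+k-2\ge1$, and the leading part is a nonzero derivation of the top-degree homogeneous component.
\item[(ii)] $h$ has $k=1$ together with $\ell\ge2$ (that is, $x^\ell p$). Here the $x$-degree grows.
\end{itemize}

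Note that $k\ge2$ with $\ell+k=2$ is Gaussian, and that $k\le1$ with only $x^0p,xp$ lies in $\mathscr C_X$. So $H\notin\mathscr F_X$ means exactly case (i) or case (ii). We would choose a weighted degree $w(x^ap^b)=\alpha a+\beta b$ adapted to the Newton polygon of $h$, so that the extreme monomial of $h$ dominates. We then argue that the leading symbol of $(A^\dagger)^j o_{m,n}$ is the $j$-fold Poisson bracket of the leading part of $h$ applied to $x^mp^n$. This is nonzero because a Poisson derivation by a non-Casimir homogeneous polynomial has no polynomial nilpotent vectors except functions of its own level sets. This last point needs care and must exclude $o$ being a function of $h$. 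The dissipative terms are lower order in the weighted grading, or preserve degree. Hence $\dim\mathcal O=\infty$, with leading degrees $d_j\to\infty$.

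\emph{Step 3 (nondegenerate pairing; the main obstacle).} Infinitely many future directions alone is not enough. We need $\mathcal R$ to separate them, i.e.\ the pairing matrix must have unbounded rank. For Gaussian $W_0$, the past vectors $A^k{\cal M}_\eta W_{s_0}$ are of the form (polynomial)$\times$Gaussian. Pairing a polynomial of degree $d$ against polynomial-times-Gaussian functions is the moment functional of a positive-definite Gaussian, which is nondegenerate on polynomials. So, provided $\mathcal R$ contains polynomial multipliers of every degree, the pairing restricted to the leading-degree filtration is triangular with nonzero diagonal entries.

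Showing that $\mathcal R$ grows uses the same degree argument on the forward side, now with $\mathcal M_\eta$ supplying the seed $(x+\tfrac12\partial_x)W$. This is where the hypothesis $u_0\ne0$ (together with $\eta\kappa>0$) enters, since it prevents cancellations that would confine the reachable span to a finite symmetric subspace. The ``almost every Gaussian'' clause handles the remaining problem. Each finite minor of the pairing matrix is a real-analytic function of the Gaussian parameters (mean, covariance) and of $s_0$. It is not identically zero, as one checks at a convenient point, e.g.\ a highly squeezed or large-displacement limit where the leading-degree terms dominate. Its zero set therefore has measure zero, and a countable union of such zero sets is still null. So we obtain unbounded rank for almost every preparation.

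Finally, we invoke Theorem~\ref{thm:general-rank-obstruction} to conclude that no finite robust $C^1$ filter exists. The step I expect to be hardest is the leading-symbol nonvanishing in Step 2 uniformly over all $(m,n)$, combined with the triangularity of the pairing in Step 3. In particular, we must rule out accidental cancellations between Hamiltonian and measurement-induced terms at equal weighted degree. I would first try a lexicographic grading (by $p$-degree in case (i), by $x$-degree in case (ii)) to isolate a single dominating term.
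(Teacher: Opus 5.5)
There is a genuine gap, and it sits in the criterion at the end of your Step~1. In $N(t,s)=\langle o,e^{(t-s)A}{\cal M}_\eta e^{sA}W_0\rangle$ the past time $s$ appears twice. Hence $\partial_s^k$ at $s_0$ produces the commutator directions $\operatorname{ad}_{-A}^{\,k}({\cal M}_\eta)W_{s_0}$ (the paper's $G_k$), not the Krylov vectors $A^k{\cal M}_\eta W_{s_0}$.

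Your matrix satisfies $\langle (A^\dagger)^jo,A^k{\cal M}_\eta W_{s_0}\rangle=\langle o,A^{j+k}{\cal M}_\eta W_{s_0}\rangle$. It is therefore the Hankel matrix of $t$-derivatives of the single function $t\mapsto N(t,s_0)$. Its rank tells you whether that function is an exponential polynomial, not whether $N(t,s)$ is separable, so it gives no lower bound on $\rank K_O$. For example, if $A$ commuted with ${\cal M}_\eta$, then $N$ would not depend on $s$ at all (rank one), while your matrix would generically have infinite rank. The Markov-parameter analogy breaks because the linearized input vector ${\cal M}_\eta W_s$ is time dependent.

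This error is not cosmetic. For $H\in\mathscr G$, $A$ is quadratic in $x,p,\partial_x,\partial_p$, so every $\operatorname{ad}_{-A}^{\,k}({\cal M}_\eta)$ stays first order with affine coefficients. The true columns therefore span a finite-dimensional space, which is exactly why Gaussian responses have finite rank. In your objects, by contrast, $-\tfrac12{\cal M}_\eta^2$ contributes a factor $x^2$ per step on both sides. For instance, $(A^\dagger)^jo=(-\eta\kappa)^jx^{2j}o+\text{lower degree}$ for every Gaussian $H$. The degree growth in your Steps~2--3 therefore never uses $H\notin\mathscr F_X$. The argument proves too much: run on the Gaussian control, it would predict unbounded rank, whereas $K_X$ there has rank two.

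Once the columns are corrected, the whole burden is to show that $\operatorname{ad}_{-A}^{\,j}({\cal M}_\eta)W_0$ acquires unbounded leading degree for every $H\notin\mathscr F_X$. Your case analysis does not supply this.
\begin{itemize}
\item It misses the kinetic--potential family $\frac\beta2p^2+V(x)$ with $\deg V\ge3$, which includes Duffing. This family contains no monomial of your types (i) or (ii), yet lies outside $\mathscr F_X$ because $\mathscr F_X$ is a union rather than a linear space. The growth there comes only from alternating the two terms, $x^{q}\mapsto x^{q-1}p\mapsto x^{q+d-2}$.
\item It skips the cases where the naive leading term stops and the paper needs a different sequence. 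In the resonant case $k>\ell$, $k/(k-\ell)\in\mathbb N$, the multiplication chain terminates and the derivative recurrence is used instead. Pure $p^k$ needs momentum diffusion. For $xp^k$ the columns grow in $p$, so rows mixing measurement and Hamiltonian actions are required.
\item It skips the shear family $\frac\beta2p^2+F(x)p+V(x)$. There, completing the square can cancel the nonlinear force. Tuned cycles are then needed, running through $x$-diffusion or the cubic part of $U$ together with the third-order Moyal term.
\end{itemize}

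Two further claims are unsupported. First, the centering contributes $\langle O\rangle_t\langle1,\Phi_u(t,s){\cal M}_\eta W_s\rangle/\langle1,W_t\rangle$. This is a function of $t$ times another response kernel, not a finite-rank correction. The paper instead shows that the resulting tests $(A^\dagger)^q1$, $q<i$, are lower order and eliminates them. Second, you defer the nonvanishing of each minor to ``a convenient point,'' but this is the substance of the almost-every statement. The paper obtains it from the top $u_0$-coefficient $({\cal M}_\eta^\dagger-cx_0)^i$ and from pivots that are nonzero polynomials in $p_0$.
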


For the basic nonexistence result, several of these qualifiers are not required.
A global \(d\)-dimensional \(C^1\) filter would imply
\(\rank K_O\leq d\) at every admissible reference record, so a single
infinite-rank response is enough to produce a contradiction. We prove the
stronger statement that this obstruction persists on open neighborhoods of
records and for almost every Gaussian preparation. This shows that the result
does not depend on a finely tuned record or initial state and rules out finite
realizability arising from fragile, isolated cancellations.

Here ``almost every'' is taken with respect to the usual displacement and
covariance parameters, and separately with respect to displacement and
squeezing on the pure-Gaussian family. For each size \(N\), the selected
response minor is an analytic function of these parameters and is nonzero for
a convenient coherent preparation. Its zero set therefore has measure zero,
and a countable intersection gives a full-measure set on which minors of every
size remain nonzero. The excluded measure-zero preparations are only points
at which the particular minors used in the proof may cancel or change their
leading-order behavior; they are not asserted to have finite rank.

Together with the finite closures described in
Sec.~\ref{sec:finite-classes-background}, this establishes both sides of the
one-mode polynomial classification. Gaussian dynamics factor through the
conditional means and covariances, while every fixed polynomial observable
under CMM dynamics factors through finitely many conditional-moment
coefficients \cite{emersonCMM}. Outside these two classes, the theorem gives infinite observable
rank for almost every Gaussian preparation and therefore rules out any global
finite-dimensional \(C^1\) realization of the stated observable response.

The proof has four steps.  First, a $d$-dimensional filter forces every finite
response matrix to have rank at most $d$.  Second, derivatives with respect to
the past record produce a sequence of state perturbations, while future time
derivatives produce a sequence of tests of those perturbations.  Third, every
Hamiltonian outside the Gaussian and CMM classes contains nonlinear coupling
that raises the leading polynomial degree of selected perturbations without
bound.  We choose rows and columns that detect these successive leading terms,
giving triangular response matrices with nonzero diagonal.  Finally, the few
places where the leading coefficient or a Gaussian pairing can vanish are
handled by changing the selected row, column, or constant reference input.
The result is a nonzero response minor of every size.

The rest of this section carries out these steps at the level needed to see
why the classification works.  The exact polynomial ordering, domain
statements, and exhaustive coefficient calculations are given in the
Supplemental Material \cite{supplemental}.

\subsection{From response derivatives to rank}
\label{sec:rank-tests}

The first point is to separate the size of a formal hierarchy from the rank
of the response.  An infinite list of moments can still depend on only a few
independent directions.  What must be shown is instead that perturbations of
the past record create arbitrarily many linearly independent future
responses.

If a $d$-dimensional filter produced the same input--output map, its
linearization would pass through the $d$-dimensional tangent state at the
splitting time.  The kernel would therefore have a separated form
\begin{equation}
 K_O(t,s)=\sum_{a=1}^d f_a(t)g_a(s).
 \label{eq:main-separated-kernel}
\end{equation}
Here the functions $g_a$ describe how a past perturbation changes the
internal state, and the functions $f_a$ describe how that change appears in
the future observable.

For the local calculation, take one-sided derivatives at the split,
\(s_0=t_0=\tau\), and form
\begin{equation}
 R_{ij}
 =
 \left.
 \partial_t^i\partial_s^jK_O(t,s)
 \right|_{(\tau,\tau)}.
 \label{eq:main-response-matrix}
\end{equation}
If Eq.~\eqref{eq:main-separated-kernel} held, then
\begin{equation}
 R_{ij}
 =\sum_{a=1}^d f_a^{(i)}(\tau)g_a^{(j)}(\tau),
\end{equation}
so every finite submatrix of \(R\) would have rank at most \(d\). To see
this directly, for each \(a\), the values \(f_a^{(i)}(\tau)\) form one
future-derivative vector and the values \(g_a^{(j)}(\tau)\) form one
past-derivative vector. Their contribution to \(R\) is the outer product of
these two vectors and therefore has rank one. Thus, \(d\) separated terms in
\(K_O\) produce a sum of at most \(d\) rank-one matrices. A nonzero \(N\times N\) minor
therefore proves \(\rank K_O\geq N\). Finding such minors for arbitrarily
large \(N\) proves infinite rank.

The rows and columns have a simple meaning in the controlled Wigner
evolution.  Take a constant reference input \(u_0\), write its Wigner-space evolution generator as
\(A=A_{u_0}\), and let \(W_\tau\) be the unnormalized state at the split.
For \(s<\tau<t\), the exact kernel can be separated at the splitting time
into a future readout factor and a past perturbation factor. A variation of
the record at time \(s\) acts on the state through the measurement operator
\({\cal M}_\eta\). Propagating the resulting perturbation forward to
\(\tau\), while pulling the centered observable backward from \(t\) to
\(\tau\), defines
\begin{align}
 F(t)
 &=
 e^{A^\dagger(t-\tau)}
 \bigl(o-\langle O\rangle_t\bigr),
 \label{eq:main-local-future-factor}\\
 G(s)
 &=
 e^{A(\tau-s)}{\cal M}_\eta e^{-A(\tau-s)}W_\tau .
 \label{eq:main-local-past-factor}
\end{align}
Both factors are therefore represented at the splitting time, where their
pairing gives
\begin{equation}
 K_O(t,s)
 =
 \frac{\langle F(t),G(s)\rangle}
 {\langle1,W_t\rangle},
 \qquad
 W_t=e^{A(t-\tau)}W_\tau .
 \label{eq:main-local-kernel-split}
\end{equation}
The factor \(F(t)\) is the future observable test transported back to the
split, while \(G(s)\) is the state direction obtained by inserting the past
record perturbation at time \(s\) and transporting it forward to the split.

Taking the \(i\)th derivative with respect to the future time \(t\) and the
\(j\)th derivative with respect to the past time \(s\) gives the row and column objects
\begin{align}
 F_i
 &=
 (A^\dagger)^i
 \bigl(o-\langle O\rangle_\tau\bigr),
 \label{eq:main-future-tests}\\
 G_j
 &=
 \operatorname{ad}_{-A}^{\,j}({\cal M}_\eta)W_\tau ,
 \label{eq:main-past-directions}
\end{align}
where
\[
 \operatorname{ad}_{-A}^{\,0}({\cal M}_\eta)={\cal M}_\eta,
 \qquad
 \operatorname{ad}_{-A}^{\,j+1}({\cal M}_\eta)
 =
 [-A,\operatorname{ad}_{-A}^{\,j}({\cal M}_\eta)] .
\]
The formula for \(G_j\) follows by differentiating the conjugated insertion
\(e^{A(\tau-s)}{\cal M}_\eta e^{-A(\tau-s)}\) with respect to \(s\) at
\(s=\tau\).  Thus \(G_j\) is the \(j\)th state direction created by moving the
past record insertion, while \(F_i\) is the \(i\)th future test applied to
that direction.

With these definitions, the leading part of the response-derivative matrix is
\begin{equation}
 R_{ij}
 =
 \frac{1}{\langle1,W_\tau\rangle}
 \langle F_i,G_j\rangle
 +\text{lower-order row--column terms}.
 \label{eq:main-response-pairing}
\end{equation}
At a high level, the proof now becomes a controlled choice of rows and
columns.  We choose past directions \(G_{j_c}\) and future tests
\(F_{i_r}\) so that, after ordering them by their leading polynomial
signature, the selected \(N\times N\) pairing matrix has a new highest-degree
term on each diagonal entry and only previously detected terms on one side of
the diagonal.  This makes the selected minor triangular at leading order, with
nonzero diagonal, and therefore nonzero determinant.

The lower-order terms above come from differentiating
\(\langle1,W_t\rangle^{-1}\) and the scalar centering
\(\langle O\rangle_t\) in Eq.~\eqref{eq:main-local-kernel-split}.  They do
not create the leading polynomial signatures used below.  In the selected
minors, elementary triangular elimination removes these lower-order
combinations without changing whether the determinant vanishes.  We may
therefore follow the highest polynomial signature in each pairing.  The
normalization and elimination steps are detailed in the Supplemental Material
\cite{supplemental}.

\subsection{How nonlinear mixing raises the response degree}
\label{sec:nonlinear-mixing-mechanism}

The leading Hamiltonian action makes the source of increasing rank visible.
The leading part of the Moyal commutator is the Hamiltonian phase-space flow,
\begin{equation}
 {\cal H}^{(1)}_hW
 =(\partial_xh)\partial_pW-(\partial_ph)\partial_xW,
\end{equation}
while the remaining Moyal terms contain more derivatives and therefore have
lower polynomial degree.  For a Hamiltonian monomial
$h_{\ell k}x^\ell p^k$, its action on a test
monomial begins with
\begin{equation}
 \begin{aligned}
 {\cal H}_{h_{\ell k}x^\ell p^k}^{\dagger}(x^ap^b)
 &=h_{\ell k}(ak-b\ell)\,
 x^{a+\ell-1}p^{b+k-1}\\
 &\quad+\text{terms of lower degree}.
 \end{aligned}
 \label{eq:main-leading-monomial-action}
\end{equation}
The displayed term is the ordinary Hamiltonian flow.  If
$\ell+k\ge3$ and $ak-b\ell\ne0$, one application raises the total degree
by $\ell+k-2$.  Repeated appearances of this term can therefore create an
unbounded sequence of distinct polynomial responses.

The coefficient \(ak-b\ell\) also explains why the proof must select
appropriate rows and columns rather than simply take consecutive ones. It
vanishes when the exponent pairs \((a,b)\) and \((\ell,k)\) are proportional. In
that case, we choose the first past direction \(G_j\), or future test \(F_i\),
whose leading exponent is not parallel to \((\ell,k)\). If the corresponding
pairing with the centered Gaussian vanishes by parity, we instead take an
additional past or future derivative with the opposite parity. Because
\(A_{u_0}\) contains \(u_0{\cal M}_\eta\), which adds one power of the measured
coordinate, choosing \(u_0\ne0\) makes both parities available. These choices
do not alter the degree-raising mechanism; they only select a row and column
in which it is visible.

The simplest complete cycle occurs when a nonlinear potential is combined
with a quadratic kinetic term,
\begin{equation}
 \begin{aligned}
 h(x,p)&=\frac{\beta}{2}p^2+V(x),\\
 V(x)&=v_dx^d+\text{lower powers},
 \qquad d\ge3.
 \end{aligned}
 \label{eq:main-kinetic-potential-symbol}
\end{equation}
At leading order the two pieces act as
\begin{align}
 x^ap^b
 &\xmapsto{\,p^2\,}
 \beta a\,x^{a-1}p^{b+1},
 \label{eq:main-kinetic-step}\\
 x^ap^b
 &\xmapsto{\,V(x)\,}
 -dv_db\,x^{a+d-1}p^{b-1}.
 \label{eq:main-potential-step}
\end{align}
Starting from a pure $x$ term, the kinetic part creates one power of $p$ and
the potential part converts it back into a higher power of $x$:
\begin{equation}
 x^{q_r}
 \longmapsto
 x^{q_r-1}p
 \longmapsto
 x^{q_r+d-2},
 \qquad
 q_r=1+r(d-2).
 \label{eq:main-degree-cycle}
\end{equation}
Neither piece alone is obstructed: the kinetic term is Gaussian and the
potential is in the aligned CMM class.  Their alternation transfers degree
between the measured and conjugate quadratures and raises it without bound.
This is the basic mixing mechanism behind the theorem.

For any $N$, we now select columns
$G_{j_1},\ldots,G_{j_N}$ along an increasing chain and rows
$F_{i_1},\ldots,F_{i_N}$ that detect its successive leading monomials.  On a
Gaussian preparation, every $G_j$ is a Gaussian weight times a polynomial.
The polynomial has a highest Hermite component fixed by its leading monomial,
and components of lower degree cannot contribute to that component.  Ordering
the selected rows and columns by this Hermite degree and subtracting earlier
ones from later ones leaves
\begin{equation}
 \langle F_{i_r},G_{j_c}\rangle=0\quad(r<c),
 \qquad
 \langle F_{i_r},G_{j_r}\rangle\ne0.
 \label{eq:main-triangular-pairings}
\end{equation}
The selected pairing matrix is therefore triangular with nonzero diagonal.
Its determinant is nonzero, so it gives an $N\times N$ response minor.  Since
$N$ is arbitrary, any Hamiltonian producing such a chain has infinite
response rank.

\subsection{Why the mechanism covers every Hamiltonian outside the finite classes}
\label{sec:mechanism-to-classification}

For \(X\)-homodyne monitoring, a polynomial depending only on \(X\) belongs
to the CMM class, while every polynomial of total degree at most two is
Gaussian.  The individual monomials belonging to neither class are therefore
\begin{equation}
 p^k\quad(k\geq3),
 \qquad
 x^\ell p^k
 \quad(\ell,k\geq1,\ \ell+k\geq3).
 \label{eq:main-offending-signatures}
\end{equation}

This list does not by itself cover every Hamiltonian outside
\({\mathscr F}_X\), because \({\mathscr F}_X\) is a union rather than a linear
space.  In particular,
\begin{equation}
 h(x,p)=\frac{\beta}{2}p^2+V(x),
 \qquad
 \beta\ne0,\quad \deg V\geq3,
 \label{eq:main-offending-binomial}
\end{equation}
contains no offending monomial: its kinetic term is Gaussian and its potential
term is CMM.  Their combination nevertheless belongs to neither finite class.
This is the kinetic--potential mechanism described in
Eq.~\eqref{eq:main-degree-cycle}.

For a general polynomial containing one of the monomials in
Eq.~\eqref{eq:main-offending-signatures}, a finite ordering of the powers of
\(x\) and \(p\) selects the highest offending term.  Its action in
Eq.~\eqref{eq:main-leading-monomial-action} supplies the leading entry in each
chosen column.  Lower Hamiltonian terms, damping, diffusion, and the
It\^o--Stratonovich correction have lower order in the same arrangement.
They may change entries below the selected diagonal, but they cannot cancel
its leading coefficient.

The only remaining family for which this ordering does not immediately finish
the argument is
\begin{equation}
 h(x,p)=\frac{\beta}{2}p^2+F(x)p+V(x),
 \qquad \beta\ne0.
 \label{eq:main-residual-shear-family}
\end{equation}
Completing the square gives
\begin{equation}
 h(x,p)=\frac{\beta}{2}
 \left(p+\frac{F(x)}{\beta}\right)^2
 +V(x)-\frac{F(x)^2}{2\beta}.
 \label{eq:main-completed-square}
\end{equation}
The binomial in Eq.~\eqref{eq:main-offending-binomial} is the case \(F=0\).
More generally, the shift of \(p\) leaves the measured coordinate \(x\)
unchanged.  If the remaining potential is nonlinear, the
kinetic--potential cycle \eqref{eq:main-degree-cycle} applies in the shifted
coordinate.  If its nonlinear part cancels, then \(F\) must be nonlinear,
since otherwise the original Hamiltonian would be Gaussian.  The momentum
shift places this nonlinearity in the record and damping operators.  Selecting
columns from the resulting \(G_j\) sequence, rather than from the cancelled
potential sequence, again produces unbounded leading degrees.  This is the
only cancellation that requires a different sequence of columns.

Together with the case treated above, the offending monomials and this
residual family exhaust all polynomial Hamiltonians outside the Gaussian and
CMM classes.  For every \(N\), the corresponding
\(N\times N\) response determinant is nonzero for a coherent Gaussian
preparation and a suitable constant input \(u_0\).  Each determinant is an
analytic function of the Gaussian parameters, so its zero set has measure
zero.  Taking the countable union of these zero sets over
\(N=1,2,\ldots\) still gives a measure-zero set.  Outside it, nonzero response
minors exist at every size, proving
Theorem~\ref{thm:main-classification}.

\subsection{Robustness and extensions}
\label{sec:classification-extensions}

For any proposed dimension \(d\), the selected
\((d+1)\times(d+1)\) minor varies continuously with the reference input.
Once nonzero, it remains nonzero on an open record neighborhood.  Because
every such neighborhood has positive probability under the physical record
law, the obstruction rules out both a robust local realization and a
\(C^1\) filter claimed to be exact almost surely, rather than only a filter at
one finely tuned record.

Additionally, the result rotates with the monitored quadrature.  For
\begin{equation}
 X_\theta=X\cos\theta+P\sin\theta,
 \qquad
 P_\theta=-X\sin\theta+P\cos\theta,
\end{equation}
the finite class is
\begin{equation}
 {\mathscr F}_\theta={\mathscr G}\cup
 \left\{V(X_\theta)+aP_\theta
 +b\operatorname{Sym}(X_\theta P_\theta)\right\}.
 \label{eq:main-rotated-finite-class}
\end{equation}
Every polynomial Hamiltonian outside \({\mathscr F}_\theta\) has infinite
rank under \(X_\theta\)-homodyne monitoring.

For heterodyne monitoring, the past perturbation has two real components, so
\begin{equation}
 K_O:L^2(I_-;\mathbb R^2)\longrightarrow L^2(I_+).
 \label{eq:main-heterodyne-KO}
\end{equation}
If this operator had finite rank, its restriction to either record channel
would also have finite rank.  A nonlinear Hamiltonian can be aligned with at
most one of the two conjugate measurement directions, so the other channel
produces the one-mode obstruction.  The finite heterodyne polynomial class
therefore reduces to the Gaussian class.

Inefficient detection with \(\eta>0\) rescales the record insertion, while
thermal diffusion and damping leave the protected leading terms unchanged.
These effects modify the response coefficients but not the exact rank.

The single-mode obstruction also persists when the mode is embedded in a
finite network with generic polynomial coupling.  At zero coupling the
selected minors reduce to the nonzero one-mode minors, and their analytic
dependence on the coupling parameters keeps them nonzero away from a
measure-zero exceptional set.  This establishes that the one-mode boundary is stable
under generic finite-network embeddings without requiring a separate
classification of every multimode nonlinearity. As noted at the beginning of this section, precise statements and proofs of
these results are provided in the Supplemental Material
\cite{supplemental}.


\section{Approximation from the response spectrum}
\label{sec:approximation}

The exact classification distinguishes finite from infinite response rank, but
does not determine how well an infinite-rank response can be approximated.
Its independent response directions may decay rapidly, leaving only a small
number relevant at a chosen accuracy. We study this question at three levels.
First, the singular values of \(K_O\) define the optimal local response
dimension. We then test whether the corresponding coordinates remain useful
for finite changes of the record. Finally, we compare this local benchmark
with a global approximation constructed specifically for Kerr dynamics. This section develops the mathematical constructions, while the following
section presents their numerical implementation and results.

\subsection{Optimal local response approximation}

Fix a physical record and the past and future intervals \(I_-\) and \(I_+\)
used to define \(K_O\).  For a small deterministic perturbation
\(h\in L^2(I_-)\) of the past record,
\begin{equation}
 \delta\langle O\rangle(t)
 =
 (K_Oh)(t)+o(\|h\|_{L^2(I_-)}),
 \qquad t\in I_+.
 \label{eq:main-linear-response}
\end{equation}
On the finite intervals considered here, the response kernel
\(K_O(t,s)\) is square integrable, so \(K_O\) is a Hilbert--Schmidt operator
and therefore compact.  It consequently admits a singular-value
decomposition,
\begin{equation}
 K_Oh
 =
 \sum_{j\geq1}
 \sigma_j\langle v_j,h\rangle_{I_-}w_j,
 \label{eq:main-response-svd}
\end{equation}
where \(\{v_j\}\) and \(\{w_j\}\) are orthonormal singular functions on
\(I_-\) and \(I_+\), respectively, and
\(\sigma_1\geq\sigma_2\geq\cdots\geq0\). The inner product
$\langle v_j,h\rangle_{I_-}$ is simply the amount of the $j$th past pattern
contained in the applied perturbation.  A component $v_j$ is therefore sent
to the future component $w_j$ with strength $\sigma_j$.

Keeping the first $d$ terms gives
\begin{equation}
 K_O^{(d)}h
 =\sum_{j=1}^{d}
 \sigma_j\langle v_j,h\rangle_{I_-}w_j.
 \label{eq:main-truncated-response}
\end{equation}
This is the best rank-\(d\) linear approximation to \(K_O\)
\cite{eckartyoung1936}. Related uses of Hankel singular values and balanced
coordinates in nonlinear model reduction are developed in
Refs.~\cite{scherpen1993,fujimotoscherpen2010}.  In operator norm,
\begin{equation}
 \inf_{\operatorname{rank}R\leq d}\|K_O-R\|
 =\sigma_{d+1},
 \label{eq:main-rank-d-error}
\end{equation}
while in Hilbert--Schmidt norm the optimal error is
$\bigl(\sum_{j>d}\sigma_j^2\bigr)^{1/2}$.  We use the operator-norm
definition
\begin{equation}
 d_\epsilon
 =\min\left\{d:
 \frac{\sigma_{d+1}}{\sigma_1}\leq\epsilon
 \right\}.
 \label{eq:main-effective-dimension}
\end{equation}
Thus $d_\epsilon$ is the smallest number of record directions that reproduces
every infinitesimal perturbation to relative response error at most
$\epsilon$.  A finite-rank system has $\sigma_j=0$ beyond some index.  An
infinite-rank system has no exact termination, but it can still have a small
$d_\epsilon$ if its singular values decrease rapidly.  This distinction
between exact and finite-accuracy memory is what Fig.~\ref{fig:main-response-complexity}
tests first.

\subsection{Finite changes of a record}

The SVD describes the first-order response around one reference record.  To
test its accuracy for small but noninfinitesimal changes, we perturb the past
record along the first \(d\) singular directions,
\begin{equation}
 h(t)=\sum_{j=1}^{d}z_jv_j(t),
 \qquad
 z_j=\langle v_j,h\rangle_{I_-}.
 \label{eq:main-hankel-coordinates}
\end{equation}
The coefficients \(z_j\) specify how much of each past response pattern is
added.  Because the \(v_j\) are orthonormal, the total perturbation radius is
\begin{equation}
 r=\|h\|_{L^2(I_-)}
 =\left(\sum_{j=1}^{d}z_j^2\right)^{1/2}.
 \label{eq:main-perturbation-radius}
\end{equation}
In the simulations, this perturbation is applied to the observation record as
\begin{equation}
 dY_t^{(h)}=dY_t+h(t)\,dt.
\end{equation}
Thus \(r\) measures the integrated magnitude of the deterministic drift added
over the past window.

The linear prediction in these coordinates is fixed by the response operator:
\begin{equation}
 \delta\langle O\rangle_{1}(t;z)
 =\sum_{j=1}^{d}\sigma_jz_jw_j(t).
 \label{eq:main-linear-hankel-decoder}
\end{equation}
For a noninfinitesimal perturbation, the full past-to-future map is generally
curved.  We approximate its leading curvature on the same \(d\)-dimensional
record subspace by writing
\begin{align}
 \delta\langle O\rangle_{2}(t;z)
 &=\delta\langle O\rangle_{1}(t;z)
   +\sum_{1\leq j\leq k\leq d}A_{jk}(t)z_jz_k,
 \label{eq:main-quadratic-hankel-decoder}\\
 \delta\langle O\rangle_{3}(t;z)
 &=\delta\langle O\rangle_{2}(t;z)
   +\sum_{1\leq j\leq k\leq\ell\leq d}
    B_{jk\ell}(t)z_jz_kz_\ell.
 \label{eq:main-cubic-hankel-decoder}
\end{align}
The functions \(A_{jk}(t)\) and \(B_{jk\ell}(t)\) are fitted using full
simulations at symmetric perturbations around the reference record and tested
on separate perturbation directions.

The SVD is optimal only for the linearization \(K_O\).  The quadratic and
cubic models ask whether the coordinates selected by that linearization remain
useful for the nearby nonlinear map.  They contain more fitted coefficients,
but they still receive only the same \(d\) inputs \(z_1,\ldots,z_d\) and
approximate the past-to-future map only on
\[
 u+\operatorname{span}\{v_1,\ldots,v_d\}.
\]
They therefore cannot recover a record direction omitted from this subspace.
The
construction here answers the narrower question of whether the dimension
selected from \(K_O\) remains useful for nearby finite perturbations.
Figure~\ref{fig:main-local-attainability} tests this question.

For a perturbation of radius \(r\), we compare the exact and predicted future
changes using the scaled trajectory error
\begin{equation}
 E(r)=
 \frac{
 \|\delta\langle O\rangle_{\mathrm{exact}}
 -\delta\langle O\rangle_{\mathrm{model}}\|_{L^2(I_+)}
 }{r\sigma_1}.
 \label{eq:main-scaled-local-error}
\end{equation}
The numerator is the \(L^2\) error over the full future trajectory, rather than
an error at one time.  The denominator is the largest linear
\(L^2(I_+)\) response possible from a perturbation of radius \(r\).  This is
the same response scale used to define \(d_{10^{-3}}\), so
\(E(r)=10^{-3}\) can be compared directly with that threshold.  

\subsection{A structured Kerr approximation}
\label{sec:constructive-state-approximations}

The singular values of \(K_O\) describe the smallest local response dimension,
but they do not by themselves provide evolution equations for a causal filter.
For a constructive comparison, we introduce a finite-dimensional approximation
adapted specifically to the monitored Kerr oscillator.  Unlike the local
response models above, this approximation evolves directly with the
measurement record and can be compared with full Fock integration.

The Kerr Hamiltonian is diagonal in photon number, and a coherent state has a
Poisson photon-number distribution.  This suggests expanding the conditional
state in orthogonal polynomials adapted to a Poisson weight rather than
retaining a separate amplitude for every occupied Fock level.  Damping changes
the mean photon number from \(N_0\) to \(N_0e^{-\kappa t}\).  For a fixed
filtering horizon \(T\), we use the reference weight
\begin{equation}
 w_{\mu}(n)=\frac{e^{-\mu}\mu^n}{n!},
 \qquad
 \mu=N_0e^{-\kappa T}.
 \label{eq:main-charlier-reference-weight}
\end{equation}
Let \(\phi_r(n;\mu)\) denote the Charlier polynomials orthonormal in this
weight \cite{koekoekleskyswarttouw2010},
\begin{equation}
 \sum_{n\geq0}
 w_\mu(n)\phi_r(n;\mu)\phi_s(n;\mu)
 =\delta_{rs}.
 \label{eq:main-charlier-orthogonality}
\end{equation}
Projection onto the first \(M+1\) polynomials gives the best degree-\(M\)
approximation in the corresponding weighted norm.  The Poisson weight carries
the broad photon-number envelope, while the polynomial coefficients describe
the remaining number dependence.

To obtain the approximation, remove the known coherent damping envelope and
diagonal Kerr phase from the number amplitudes and denote the remaining number
dependence by \(U_n(t)\).  Under ideal heterodyne monitoring, it obeys
\begin{equation}
 \begin{aligned}
 dU_n(t)
 &=\gamma(t)e^{-iKtn}U_{n+1}(t)\,dZ_t^*,\\
 \gamma(t)
 &=\sqrt{\kappa}\,\alpha_0e^{-\kappa t/2},
 \qquad N_0=|\alpha_0|^2,
 \end{aligned}
 \label{eq:main-kerr-residual-lattice}
\end{equation}
where \(Z_t=(Y_{1,t}+iY_{2,t})/\sqrt{2}\) and
\(dZ_t\,dZ_t^*=dt\).  Expanding the residual function as
\begin{equation}
 U_n(t)=\sum_{s\geq0}c_s(t)\phi_s(n;\mu)
 \label{eq:main-charlier-expansion}
\end{equation}
and projecting onto the same orthonormal basis gives
\begin{align}
 dc_r(t)
 &=
 \gamma(t)\sum_{s\geq0}
 B_{rs}(t)c_s(t)\,dZ_t^*,
 \label{eq:main-charlier-coefficient-system}\\
 B_{rs}(t)
 &=
 \sum_{n\geq0}
 w_\mu(n)\phi_r(n;\mu)
 e^{-iKtn}\phi_s(n+1;\mu).
 \label{eq:main-charlier-matrix}
\end{align}
The matrices \(B(t)\) depend only on the model parameters and can be computed
before processing a record. For a coherent initial state,
\(c_0(0)=1\) and \(c_r(0)=0\) for \(r>0\).

Retaining \(r,s=0,\ldots,M\) gives a causal stochastic approximation with
\(M+1\) complex coefficients. Increasing \(M\) systematically enlarges the
represented polynomial space, and the numerical tests below compare its
accuracy with a well-converged Fock calculation.

The Charlier construction therefore provides a model-adapted alternative to
standard Fock truncation for monitored Kerr oscillators. It also provides a constructive comparison with the effective response
dimensions obtained from \(K_O\), but, as we will see, remains far from the
local Hankel optimum.


\section{Numerical realization tests}
\label{sec:numerics}

With these constructions in place, we now test how rapidly the response
spectrum decays beyond the exact finite boundary, whether its leading
directions remain predictive for finite changes around a physical record, and
how closely a causal global representation approaches the local optimum. As infinite-rank
examples, we consider the experimentally relevant Kerr and Duffing Hamiltonians
\begin{equation}
 \begin{aligned}
 H_{\mathrm K}
 &=\frac{K}{2}N(N-1),
 \qquad N=a^\dagger a,\\
 H_{\mathrm D}
 &=\frac12(P^2+X^2)+\frac{g}{4}X^4.
 \end{aligned}
 \label{eq:main-numerical-hamiltonians}
\end{equation}
The former is nonlinear in both quadratures. In the latter, the kinetic term
repeatedly mixes the \(X\)-dependent nonlinearity with \(P\). Both therefore
lie outside the finite aligned-\(X\)-homodyne classes when their nonlinear
coefficients are nonzero.

All simulations use \(\kappa=1\), with time measured in units of
\(1/\kappa\). The sampled kernels are weighted using the \(L^2\) quadrature
before taking their SVD, so that the resulting spectra approximate those of
the continuous operator \(K_O:L^2(I_-)\to L^2(I_+)\). Numerical algorithms,
convergence tests, and complete parameter sets are given in the Supplemental
Material, while fixed seeds, record-level results, and figure scripts are
provided in the reproducibility archive
\cite{supplemental,emerson_reproducibility_2026}. The physical parameters
needed to interpret each experiment are stated below.

\subsection{Exact rank and finite-accuracy memory}

\begin{figure*}[t]
\centering
\includegraphics[width=\textwidth]{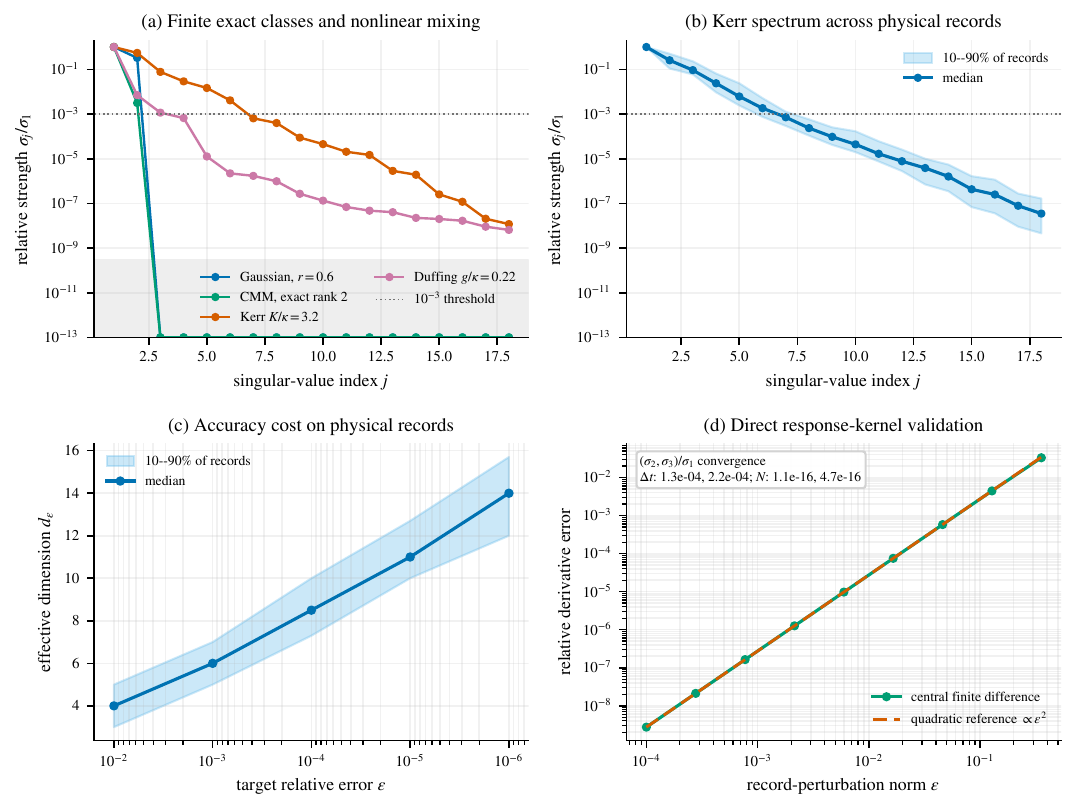}
\caption{Exact and effective observable memory on physical records.  (a)
Relative singular spectra for two finite exact controls and two nonlinear
mixing systems.  The Gaussian and CMM responses terminate after two
directions, whereas Kerr and Duffing retain decaying tails.  (b) Median and
10th--90th percentile Kerr spectra over 24 physical homodyne records.  (c)
The corresponding effective dimension $d_\epsilon$.  (d) Direct
central-finite-difference validation of the discrete response kernel.  All
panels use ideal monitoring and separated past and future windows.  Exact
zeros of the analytic CMM kernel are shown at the plotting floor.}
\label{fig:main-response-complexity}
\end{figure*}

Figure~\ref{fig:main-response-complexity}(a) first checks that the spectrum
distinguishes the two sides of the theorem.  The Gaussian control uses
$H=0.85N$, $O=X$, and a displaced squeezed state with squeezing $r=0.6$.
Its two response directions are the two components of the conditional mean;
the covariance obeys a record-independent equation.  A coherent initial
state would give a degenerate vanishing $X$ response, so the squeezed input is
used to display the nonzero Gaussian rank clearly.

All four curves in panel (a) use past and future windows of length \(0.55\),
separated by a gap of \(0.06\).  The displacement is
\(\alpha=1.4+0.55i\).  The nonlinear examples
begin in \(\lvert\alpha\rangle\), while the Gaussian control begins in
\(D(\alpha)S(0.6)\lvert0\rangle\).  The same Kerr parameters are used for the
24-record ensemble in panels (b) and (c).

The nonlinear finite control uses aligned $X$ homodyne monitoring with
\begin{equation}
 H=0.10X^3,
 \qquad O=P.
\end{equation}
Its continuum response kernel has exact rank two, as shown by
Eq.~\eqref{eq:main-cmm-explicit-kernel}.  The two leading numerical directions
agree with that kernel; the analytic zeros from the third singular value
onward are shown at the common plotting floor.  The apparent numerical tail
converges toward zero under time-step refinement, as detailed in the
Supplemental Material \cite{supplemental}.

The obstructed curves use a strong $K/\kappa=3.2$ for Kerr and $g/\kappa=0.22$ for
Duffing, with $O=X$.  They do not terminate at a small index, and their
leading tails persist under numerical refinement.  This is the finite
calculation expected from the theorem's infinite-rank conclusion.  Just as
importantly, both tails decay.  Infinite exact rank therefore does not mean
that accurate approximation requires a large number of directions at every
tolerance.

Panel (b) repeats the Kerr calculation over 24 independently generated
physical homodyne records.  The band shows record-to-record variation, not a
confidence interval or a numerical-error estimate.  The decaying tail is
present throughout the ensemble rather than being selected from one favorable
record.  Panel (c) converts each spectrum into $d_\epsilon$ using
Eq.~\eqref{eq:main-effective-dimension}.  At $\epsilon=10^{-3}$ the median is
six and the observed values range from five to seven.  Thus the response is
infinite rank exactly, while about six past-record directions capture its
local effect on $X$ to relative operator-norm accuracy $10^{-3}$ in this
experiment.

Panel (d) validates the tangent construction of the sampled kernel against
central differences of the complete simulated record-to-observable map.  The
relative discrepancy follows the expected \(\varepsilon^2\) scaling and is
\(2.74\times10^{-9}\) at \(\varepsilon=10^{-4}\).  The tangent propagation
and validation procedure are given in the Supplemental Material
\cite{supplemental}.

\subsection{Finite-amplitude prediction in the response coordinates}

Figure~\ref{fig:main-local-attainability} tests Eqs.~\eqref{eq:main-linear-hankel-decoder}--
\eqref{eq:main-cubic-hankel-decoder}.  For each physical record, we choose
\(d=d_{10^{-3}}\) and fit the quadratic and cubic terms using symmetric
perturbations of the retained coordinates.  The linear term is fixed by the
SVD, and all plotted errors use independently drawn test directions that can
also contain omitted response components.  The fitting and testing procedure
is detailed in the Supplemental Material \cite{supplemental}.

\begin{figure*}[t]
\centering
\includegraphics[width=\textwidth]{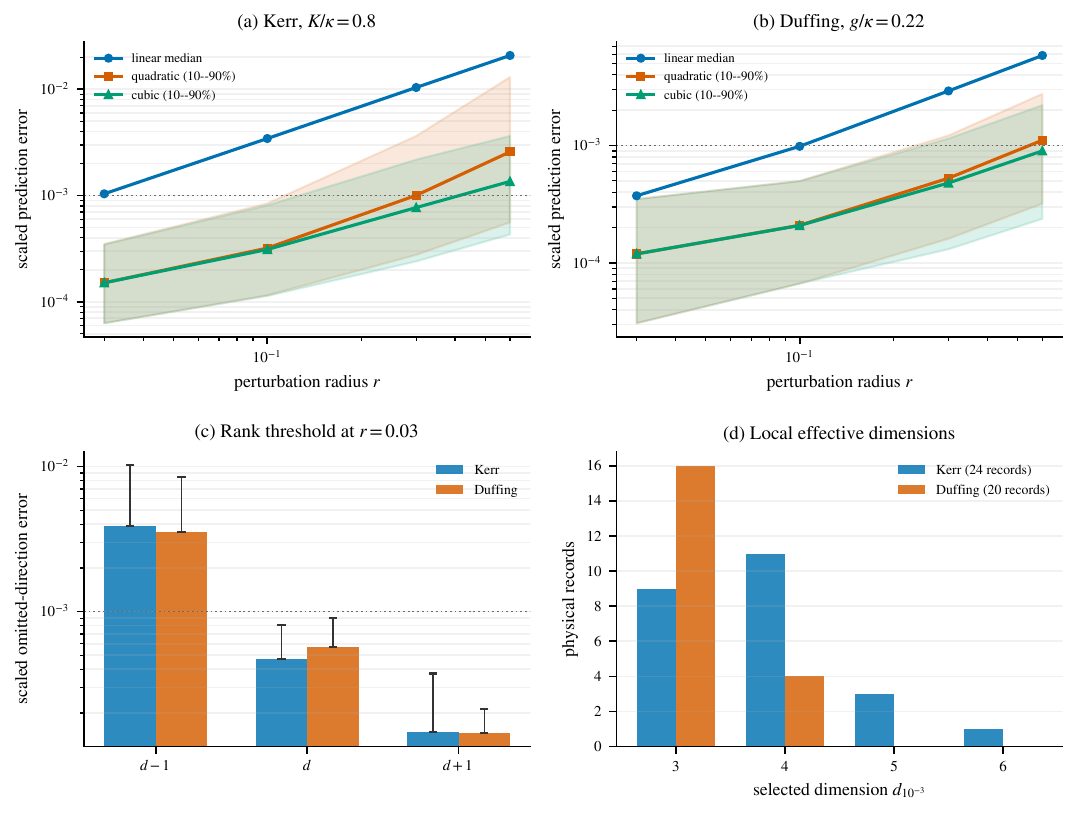}
\caption{Finite prediction in the Hankel-selected coordinates.  Median scaled
trajectory error \(E(r)\) versus \(L^2\) record-perturbation radius \(r\) for (a) Kerr
and (b) Duffing physical records.  The quadratic and cubic curves use the
same \(d=d_{10^{-3}}\) coordinates as the linear response; their shading spans
the 10th--90th percentiles, while the linear curve is a median-only baseline.
(c) Response along the first omitted direction when \(d-1\), \(d\), or
\(d+1\) directions are retained at \(r=0.03\); upper error bars reach the
90th percentile.  (d) Empirical counts of the independently selected local
dimensions over 24 Kerr and 20 Duffing records.  The dotted line marks the
target error \(10^{-3}\).}
\label{fig:main-local-attainability}
\end{figure*}

Panel (a) uses \(K/\kappa=0.8\) and \(O=X\) over 24 Kerr records.  The linear
prediction is accurate only very near the reference trajectory because its
leading missing term is quadratic in the perturbation.  Adding quadratic
dependence on the same coordinates removes most of this curvature.  Cubic
terms make little difference at the smallest radii but improve the prediction
farther from the reference record.  At \(r=0.1\), the cubic median error is
\(3.11\times10^{-4}\), and it remains below \(10^{-3}\) through \(r=0.3\). 
The initial coherent amplitude is \(2\).

Panel (b) repeats the construction for Duffing at \(g/\kappa=0.22\) over 20
records.  Quadratic and cubic decoding are again nearly identical close to
the reference record.  At \(r=0.6\), their median errors are respectively
\(1.11\times10^{-3}\) and \(9.00\times10^{-4}\), so cubic dependence becomes
useful only toward the edge of the tested neighborhood.  Agreement across
Kerr and Duffing shows that the construction is not tied to the number-basis
structure of the Kerr oscillator.  Here the initial coherent amplitude is
\(1.4+0.55i\).

Panel (c) tests whether the singular-value rule identifies a transition at
the requested accuracy rather than merely giving a sufficient dimension.  At
\(r=0.03\), each candidate dimension is tested along its first omitted
singular function.
The retained coordinates then vanish, so no quadratic or cubic combination of
them can reproduce the missing response.  With \(d-1\) directions, the median
errors are \(3.89\times10^{-3}\) for Kerr and \(3.51\times10^{-3}\) for
Duffing.  At \(d\), they fall to \(4.72\times10^{-4}\) and
\(5.74\times10^{-4}\), below the chosen target even at the 90th percentile;
\(d+1\) reduces them further.  The dimension selected from the derivative
therefore remains the relevant cutoff at small finite radius.

Panel (d) shows the record-to-record variation in the dimension selected by
the \(10^{-3}\) singular-value cutoff.  For Kerr, 20 of the 24 records require
three or four coordinates, three require five, and one requires six.  For
Duffing, 16 of the 20 records require three coordinates and the remaining four
require four.  These counts are an empirical robustness check rather than a
precise estimate of the probability of each dimension.  Together, the panels
show that the selected record space is small and fairly stable across records,
that one fewer direction misses the target accuracy, and that nonlinear
functions of the same coordinates remain useful beyond the strictly linear
regime.

The smaller Kerr dimensions here than in
Fig.~\ref{fig:main-response-complexity}(c) are not contradictory.  They
mainly reflect the weaker nonlinearity, \(K/\kappa=0.8\) rather than
\(K/\kappa=3.2\), together with the different window choices.

\subsection{A Poisson--Charlier state approximation for Kerr dynamics}

The final experiment changes from a local observable description to a
globally propagated state filter.  We integrate the truncated coefficient
system in Eq.~\eqref{eq:main-charlier-coefficient-system} under ideal heterodyne
monitoring and reconstruct the resulting state in a common Fock basis.  On
each physical record, the endpoint state error is
\begin{equation}
 e_{\mathrm{state}}
 =\min_{\varphi}
 \|\psi_{\mathrm{Fock}}-e^{i\varphi}\psi_M\|_2,
 \label{eq:main-charlier-state-error}
\end{equation}
where the phase minimization removes the physically irrelevant global phase.
For the response comparison, we instead use the operator norm of the
difference between the exact and Charlier response matrices, divided by
$\sigma_1$.

\begin{figure*}[t]
\centering
\includegraphics[width=\textwidth]{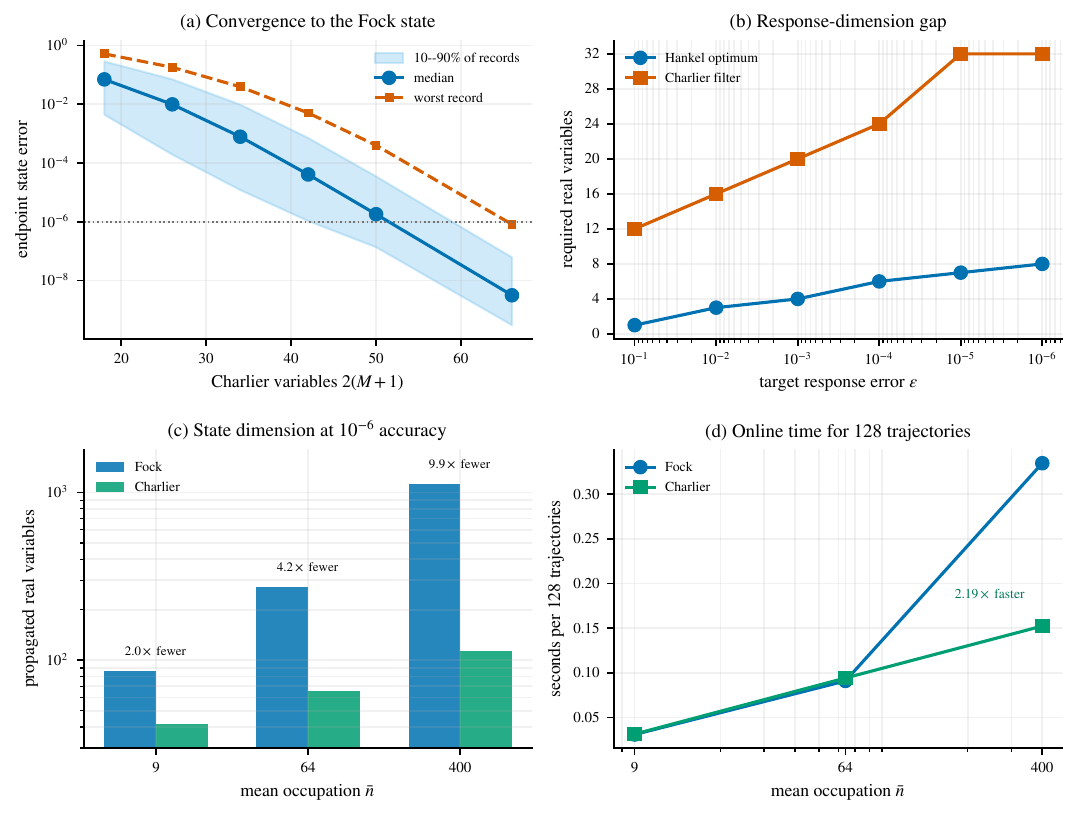}
\caption{Poisson--Charlier approximation of monitored Kerr dynamics.
(a) Phase-aligned endpoint state error over 40 physical heterodyne Kerr
records.  (b) Real variables required by the Charlier response and by the
optimal rank-$d$ response approximation on the same record.  (c)
Accuracy-matched Fock and Charlier state dimensions at fixed accumulated
mean-field phase.  (d) Their separately measured online times for a batch of
128 trajectories, with reusable operator construction excluded.  Panels
(c) and (d) use a common 90th-percentile state tolerance of $10^{-6}$.}
\label{fig:main-charlier-payoff}
\end{figure*}

Panel (a) uses $N_0=9$, $K/\kappa=0.4$, and $\kappa T=0.8$.  For each of 40
physical heterodyne records, the Charlier and cutoff-converged Fock equations
are driven by the same record.  Increasing the order from $M=8$ to $M=32$
reduces the 90th-percentile endpoint state error from
$2.86\times10^{-1}$ to $6.11\times10^{-8}$.  Every order-32 record lies below
$10^{-6}$, with worst error $8.12\times10^{-7}$.  Thus the error decreases
consistently across the tested orders and records rather than only after
averaging an observable over the ensemble.  The convergence and matched-record
integration checks are given in the Supplemental Material
\cite{supplemental}.

Panel (b) returns to the observable response.  For each requested error, it
compares the first tested Charlier dimension that reaches the target with the
optimal rank required to approximate the exact linearized response.  At
relative response error $10^{-3}$, the Hankel optimum
needs four real coordinates, whereas the Charlier sequence first reaches the
target at 20.  The difference is expected: the local SVD devotes every
coordinate to one future observable near one record, while the Charlier
filter preserves enough information to reconstruct the conditional state
throughout the measurement.  The Hankel curve is therefore a lower frontier,
not the runtime of an already constructed four-variable causal filter.
This same-record comparison uses $N_0=9$, $K/\kappa=0.4$, and $T=0.75$.

Panels (c) and (d) ask whether the global Charlier structure nevertheless
improves on direct Fock propagation.  The scan uses \(T=0.5\) and, to compare
different occupations without also increasing the accumulated mean-field
nonlinearity, fixes
\begin{equation}
 \Phi=
 \frac{KN_0(1-e^{-\kappa T})}{\kappa}=1,
 \label{eq:main-fixed-kerr-phase}
\end{equation}
and uses the smallest tested order whose 90th-percentile endpoint state error
is at most $10^{-6}$.  At occupations $N_0=9$, $64$, and $400$, the Charlier
representation uses respectively $42$, $66$, and $114$ real variables,
compared with $86$, $274$, and $1130$ for the matched Fock calculations.
The reduction therefore grows from a factor of $2.0$ to $4.2$ and then
$9.9$ across the tested cases.  These three points demonstrate increasing
compression over the tested range; they are not used to fit an asymptotic
scaling law.

Panel (d) plots the actual online propagation times rather than their ratio.
The Fock baseline is an optimized direct photon-number amplitude evolution,
not a generic dense master-equation solver.  Reusable operators are
precomputed for both methods and excluded from the timing.  At $N_0=400$, a
batch of 128 trajectories takes $0.335$ seconds in the direct Fock
representation and $0.152$ seconds in the Charlier representation, giving the
reported factor $2.19\times$.  At the two smaller occupations the times are close,
and single-trajectory timings remain comparable.  The calculation therefore
shows both the benefit and the limitation of state compression: fewer
variables reduce storage immediately, while a wall-clock advantage appears
only when the problem is large enough for that reduction to outweigh the
basis-update overhead.

\section{Discussion}
\label{sec:discussion}

The central result of this work is an exact realizability boundary for monitored nonlinear bosonic dynamics.  For one-mode polynomial Hamiltonians under quadrature monitoring, Gaussian dynamics and the measurement-aligned CMM class admit finite observable filters, whereas dynamics outside these classes generically have infinite observable Hankel rank.  The obstruction is coordinate independent and persists under inefficient detection, thermal noise, rotated monitoring, and generic finite-network embeddings.  The significance of this boundary is not that nonlinear filtering is generically impossible, but that outside the finite classes approximation becomes unavoidable: no alternative smooth parametrization can turn the exact record-to-observable map into a finite-dimensional filter.

The numerical results show that this exact distinction need not coincide with practical complexity.  The Gaussian and CMM controls terminate at finite rank as predicted, while Kerr and Duffing have nonterminating response spectra.  Yet the singular values of the latter decay rapidly over the regimes studied.  For the strongly nonlinear Kerr example, an exactly infinite-rank response requires only about six directions at relative local accuracy \(10^{-3}\), with similar behavior across independently generated physical records.  Thus infinite exact memory can coexist with small finite-accuracy memory.  We do not infer a universal decay law from two models, but the examples establish that the exact no-go result need not imply an impractically large description at experimentally relevant accuracy.

The finite-amplitude tests strengthen this interpretation.  The singular vectors are selected entirely from the derivative of the record-to-observable map at one physical trajectory, but nonlinear functions of the same coordinates remain predictive for appreciable finite changes of the record.  Moreover, the \(d-1\), \(d\), and \(d+1\) comparison shows that the singular-value cutoff is not merely a sufficient truncation: at small finite radius, omitting the next selected direction moves the error above the target, while retaining it restores the requested accuracy.  The linearized Hankel operator therefore identifies directions that remain relevant to the nearby nonlinear map.  This does not by itself construct a global filter, but it suggests that local response analysis can reveal the information-bearing coordinates from which one might be built.

The global Kerr calculation shows both the promise and the limitation of this viewpoint.  The Poisson--Charlier representation gives a genuine causal state approximation whose reduction relative to Fock propagation grows substantially with occupation and eventually produces a wall-clock advantage.  At the same time, it remains well above the local observable frontier: at relative response error \(10^{-3}\), the tested Charlier representation requires substantially more variables than the optimal local response rank.  These dimensions measure different tasks.  The Charlier filter must propagate enough information to reconstruct the conditional state over an evolving trajectory, whereas the Hankel rank concerns one future observable near one reference record.  The resulting gap may therefore contain two effects: avoidable overhead from retaining information irrelevant to the chosen output, and an intrinsic cost of transporting locally optimal information as the trajectory moves through record space.  Distinguishing these possibilities is the main constructive problem left by the present work.

This distinction also separates several notions of complexity that are usually conflated.  State dimension measures the resources used by a particular representation of the conditional state.  Exact observable memory asks for the minimum dimension through which the exact record-to-observable map can factor.  The singular spectrum adds a third quantity, finite-accuracy observable memory, which asks how many independent directions of the past record remain relevant to a specified future prediction at tolerance \(\epsilon\).  An infinite hierarchy of moments need not imply large exact memory, as the CMM examples demonstrate, and infinite exact memory need not imply large finite-accuracy memory, as the Kerr and Duffing spectra demonstrate.  Conversely, a representation efficient for the full state may retain much more information than is needed for a particular readout.

This suggests a broader way to formulate the computational complexity of monitored quantum systems. Rather than asking only for the resources required to simulate the full quantum state, one can ask for the resources required to predict a specified quantum readout from its preceding measurement record to a fixed accuracy \cite{huang2020predicting}. The effective dimension \(d_\epsilon\) provides a local, representation-independent measure of one part of this problem: the amount of past information that remains distinguishable through the chosen future observable. It is not by itself a computational-complexity measure, since a globally efficient algorithm must also transport these coordinates causally and with controlled update cost and numerical precision. Nevertheless, it provides a natural lower frontier against which such algorithms can be compared. Because this frontier is defined for a chosen output, different observables of the same monitored system may have different predictive complexity. Establishing when this local memory can be promoted to an efficient global predictor, and when additional global resources are unavoidable, could connect realization theory to complexity questions formulated directly at the level of quantum readouts.

Finally, the realization framework itself is not tied to bosonic phase space. The polynomial bosonic problem provides a setting in which the exact boundary can be evaluated analytically, but the past-to-future construction applies more generally whenever a differentiable quantum input--output map can be defined. Spin, fermionic, and hybrid systems therefore provide natural settings in which to ask whether complicated monitored many-body dynamics can nevertheless have simple observable readouts \cite{ladewig2022monitored,poggi2024measurement}. Coupled systems are especially interesting because the conditional state can become very high dimensional even when only one or a few observables are experimentally read out, creating the possibility that observable-level predictive complexity is much smaller than full-state complexity. Extending effective-memory calculations to such systems, and developing nonlinear causal coordinates that approach the local Hankel frontier over finite regions of record space, are natural next steps toward a general theory of reduced prediction in monitored quantum dynamics.

\section{Conclusion}
\label{sec:conclusion}

These results establish observable-specific memory as a natural basis for
reduced quantum filtering.  Within the finite classes, the exact closed
structure provides the appropriate realization.  Beyond that boundary, the
relevant objective is a causal approximation that retains only the parts of
the measurement record needed at the desired accuracy, rather than an exact
closure of the conditional state.  The observable Hankel framework makes this
distinction precise and provides a lower frontier against which constructed
filters can be compared.  Closing the remaining gap between that frontier and
a globally propagated filter would provide a general route to efficient
estimation and feedback in nonlinear monitored systems.

\section*{Acknowledgments}

OpenAI ChatGPT-5.6 Sol assisted with
manuscript preparation, code review, and identifying possible
errors in the derivations and numerical analysis. The author takes responsibility for the final manuscript and results.

\section*{Data Availability}

The code and numerical data used to generate the figures are available in the
Zenodo reproducibility package \cite{emerson_reproducibility_2026}.  The
archive includes fixed random seeds, generated measurement records,
record-level numerical outputs, refinement checks, and figure-generation
scripts.

\bibliographystyle{apsrev4-2}
\bibliography{references}
\clearpage
\twocolumngrid
\appendix
\section*{Supplemental Material}

\title{Supplemental Material for ``Finite Realizations and Effective Memory
in Monitored Nonlinear Quantum Dynamics''}

\author{Jacob Emerson}
\affiliation{Department of Electrical and Computer Engineering, Princeton University}

\date{\today}
\maketitle

This Supplemental Material proves the polynomial classification stated in the
main text and the record and measurement extensions used there.  It also gives
the numerical implementation and convergence details omitted from the main
text, together with additional robustness tests.

\section{Response matrices and physical records}
\label{sec:supp-response}

We use all notation from the main text.  For a constant controlled input
$u(t)=u_0$, write
\begin{equation}
 A={\cal H}_h+{\cal D}_{\bar n}
 -\frac12{\cal M}_\eta^2+u_0{\cal M}_\eta,
 \qquad
 c=\sqrt{2\eta\kappa}.
 \label{eq:supp-A}
\end{equation}
The measurement operator and the correction in this equation are
\begin{align}
 {\cal M}_\eta
 &=c\left(x+\frac12\partial_x\right),
 \label{eq:supp-M}\\
 -\frac12{\cal M}_\eta^2
 &=-\eta\kappa
 \left(x^2+x\partial_x+\frac12+\frac14\partial_x^2\right).
 \label{eq:supp-correction}
\end{align}
All calculations below apply finitely many polynomial-coefficient differential
operators to a Gaussian density and then integrate against a polynomial.  The
resulting expressions are therefore finite Gaussian moments.  Because the
Wigner evolution is linear and polynomial preserving in this setting, we may
track only the leading polynomial degrees when proving nonvanishing of the
selected minors.  Lower-degree terms can be discarded whenever they cannot
contribute to the leading signature being isolated.  When needed, we retain
specific lower-order Hamiltonian or measurement actions; these are selected
terms in the same differential evolution, retained because
they produce the required quadrature mixing for the chosen observable.

These operations are well defined, and integration by parts produces no
boundary terms.  Throughout this section the functions being paired are
Gaussian densities multiplied by polynomials, so no additional functional
analytic machinery is needed for the coefficient calculations.

We next make explicit why normalization and centering do not change the rank
test. Let \(Z(t)=\langle 1,W_t\rangle\), which is positive along the reference
trajectory. Dividing the centered unnormalized response by \(Z(t)\) amounts
to an invertible rescaling of the future output. By the product rule, differentiating the normalized response \(i\) times
either places all \(i\) derivatives on the centered unnormalized response,
giving it the nonzero coefficient \(Z(0)^{-1}\), or places at least one
derivative on \(Z^{-1}\). In the latter terms, fewer than \(i\) derivatives
act on the response. Normalization therefore cannot remove a new
independent response direction at order \(i\).

Differentiating the centering term \(\langle O\rangle_t\) produces tests of
the form \((A^\dagger)^q1\) with \(q<i\), rather than the leading test
\((A^\dagger)^i(o-\langle O\rangle_0)\). In the orderings used below, these
terms have lower future order and polynomial signature. Including the lower
rows allows them to be eliminated without changing the leading term of the
selected determinant.
It is enough to consider the reduced entries
\begin{equation}
 R_{ij}
 =
 \frac{\langle F_i,G_j\rangle}{\langle1,W_0\rangle},
 \qquad
 G_j=\operatorname{ad}_{-A}^{\,j}({\cal M}_\eta)W_0,
 \label{eq:supp-response-pairing}
\end{equation}
where
\[
 F_i=(A^\dagger)^i(o-\langle O\rangle_0),
\]
as in the main text.  We place the split at \(t=s=0\) and take \(W_0\) to be
the chosen Gaussian preparation there.  A nonzero derivative determinant at
this corner gives a nonzero evaluation determinant at sufficiently close,
distinct times.  These times may be chosen so that every past time precedes
every future time, as required by the separated intervals defining \(K_O\).

\begin{lemma}[Response-minor test]
\label{lem:supp-minor-test}
If the integral operator with kernel $K(t,s)$ has rank at most $d$, every
finite matrix of derivatives
\begin{equation}
 \left[
 \left.\partial_t^{i_r}\partial_s^{j_c}K(t,s)
 \right|_{t=s=0}
 \right]_{r,c=1}^N
 \label{eq:supp-derivative-matrix}
\end{equation}
has rank at most $d$.  Thus nonzero response minors of arbitrary size imply
$\rank K=\infty$.
\end{lemma}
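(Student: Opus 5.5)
The approach is to convert the finite-rank hypothesis into a separated form of the kernel and then differentiate term by term, as in Eq.~\eqref{eq:main-separated-kernel}. If the integral operator $K:L^2(I_-)\to L^2(I_+)$ has rank at most $d$, its range is spanned by $d$ functions $f_1,\dots,f_d\in L^2(I_+)$. Writing $Kv=\sum_a \langle g_a,v\rangle f_a$ with $g_a=K^\ast\phi_a$, where $\{\phi_a\}$ is a dual basis, identifies the kernel almost everywhere as
\[
 K(t,s)=\sum_{a=1}^d f_a(t)g_a(s).
\]
Because the response kernels we apply this to are smooth up to the corner (they are finite Gaussian pairings of propagated polynomial-coefficient operators, Sec.~\ref{sec:supp-response}), I will upgrade this identity to a pointwise one. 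I will take $f_a$ to be linear combinations of the sections $t\mapsto K(t,s_a)$ and $g_a$ to be linear combinations of $s\mapsto K(t_a,s)$.

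Concretely, I will proceed by the standard skeleton/cross-approximation argument. Choose points $t_1,\dots,t_r$ and $s_1,\dots,s_r$, with $r\le d$ the rank, such that the matrix $M=[K(t_p,s_q)]$ is invertible and $r$ is maximal. Then
\[
 K(t,s)=\sum_{p,q}K(t,s_q)(M^{-1})_{qp}K(t_p,s).
\]
This holds because every $(r+1)\times(r+1)$ evaluation minor vanishes: a nonzero one would give $r+1$ linearly independent sections, contradicting the rank bound. Continuity of the kernel lets us pass from the almost-everywhere statement to every point. The resulting $f_p(t)=\sum_q K(t,s_q)(M^{-1})_{qp}$ and $g_p(s)=K(t_p,s)$ are smooth, inheriting their smoothness from $K$.

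Differentiating the separated identity $i_r$ times in $t$ and $j_c$ times in $s$ and evaluating at the corner gives
\[
 \partial_t^{i_r}\partial_s^{j_c}K(0,0)=\sum_{a=1}^r f_a^{(i_r)}(0)\,g_a^{(j_c)}(0).
\]
The matrix \eqref{eq:supp-derivative-matrix} is therefore a sum of at most $r\le d$ outer products, and its rank is at most $d$. Derivatives at the corner are understood as one-sided limits from the admissible region, with $s$ in the past and $t$ in the future. Since the sections are smooth up to the boundary, these limits commute with the finite sum. The contrapositive gives the second claim: a nonzero $N\times N$ minor forces $\rank K\ge N$, so nonzero minors of every size force infinite rank.

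The main obstacle is the passage from operator rank to a pointwise separated kernel with derivatives at the corner point $(0,0)$, which sits at the boundary of both intervals. I would handle it by stating explicitly the regularity the lemma needs: $K$ is $C^\infty$ on $I_+\times I_-$ up to the corner. This regularity holds for our kernels by the polynomial-Gaussian structure, and the skeleton decomposition above then transfers it directly to the factors $f_a$ and $g_a$.
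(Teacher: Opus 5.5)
Your proof is correct and follows the same route as the paper: write the finite-rank kernel in separated form $K(t,s)=\sum_{a}f_a(t)g_a(s)$, then note that the corner derivative matrix factors as $UV$ with $U_{ra}=f_a^{(i_r)}(0)$ and $V_{ac}=g_a^{(j_c)}(0)$, so its rank is at most $d$. The paper simply asserts that a rank-$d$ smooth kernel has such a separated form, whereas your skeleton (cross-approximation) argument, building $f_p$ and $g_p$ from sections of $K$, actually establishes that step pointwise and with the regularity needed at the corner.
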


\begin{proof}
A rank-$d$ smooth kernel has a local form
$K(t,s)=\sum_{a=1}^d f_a(t)g_a(s)$.  For chosen derivative orders
\(i_1,\ldots,i_N\) and \(j_1,\ldots,j_N\), define
\begin{equation}
 U_{ra}=f_a^{(i_r)}(0),
 \qquad
 V_{ac}=g_a^{(j_c)}(0).
\end{equation}
The derivative matrix then factors as
\begin{equation}
 \left[
 \left.\partial_t^{i_r}\partial_s^{j_c}K(t,s)
 \right|_{0,0}
 \right]_{r,c=1}^N
 =UV.
 \label{eq:supp-derivative-factorization}
\end{equation}
Because \(U\) has only \(d\) columns and \(V\) only \(d\) rows, this matrix
has rank at most \(d\), regardless of how many derivative orders are chosen.
Different derivatives may produce new functions, but after evaluation at the
split they remain entries of these same \(d\) future and past derivative
vectors. 
\end{proof}

The controlled constant-record calculation also applies to physical records.  The record law
has full support on continuous paths: every uniform neighborhood of a smooth
record has positive probability.  As used in the main text, robustness means that
the filter and the quantum record-to-output map extend continuously to these
controlled records and are differentiable along the smooth perturbations used
above.

\begin{proposition}[Passage to stochastic records]
\label{prop:supp-records}
If $K_O$ has infinite rank at a smooth controlled record, no
finite-dimensional robust $C^1$ filter is exact on a neighborhood of that
record or exact almost surely.  For every finite $d$, the inequality
$\rank K_O>d$ holds on an open set of physical records of positive
probability.
\end{proposition}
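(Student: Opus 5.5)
The proof rests on combining three ingredients: Lemma~\ref{lem:supp-minor-test}, continuity of response minors in the reference record, and the full-support property of the record law. Fix $d$. Because $K_O$ has infinite rank at the smooth controlled record $u$, Lemma~\ref{lem:supp-minor-test} gives derivative orders $i_1,\ldots,i_{d+1}$ and $j_1,\ldots,j_{d+1}$ whose $(d+1)\times(d+1)$ derivative minor is nonzero at the split. By the remark following Eq.~\eqref{eq:supp-response-pairing}, I then replace this by an evaluation minor
\[
 \det\bigl[K_O(t_r,s_c;u)\bigr]_{r,c=1}^{d+1}\neq0,
\]
taken at distinct times with every $s_c<\tau<t_r$. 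Concretely, I would pass from derivatives to finite differences by Taylor expansion and use multilinearity of the determinant. Working with this evaluation minor, rather than the derivative minor, is convenient because only continuity of the kernel in $u$ is then needed, not continuity of high derivatives.

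The next step is to show that $u\mapsto K_O(t,s;u)$ is continuous in the uniform topology on records, uniformly for $(t,s)$ in compact sets. From Eq.~\eqref{eq:main-memory-kernel}, the kernel is a ratio of pairings of $\Phi_u(t,s)\mathcal M_\eta W_u(s)$ against polynomial tests, with denominator $\langle1,W_u(t)\rangle>0$. The key difficulty is that records are only continuous paths, not smooth ones. To handle this, I would write the controlled equation in the gauge-transformed (robust) form. Let $Y(t)=\int_0^t u$ and conjugate by $e^{Y(t)\mathcal M_\eta}$. This removes the input derivative from the generator, so the propagator depends continuously on the path $Y$ in sup norm. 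This is the classical robust-filter (Clark/Davis) construction, and it is exactly what the robustness hypothesis in the statement asserts. Granting this, the finitely many entries of the minor are continuous in $Y$. Hence $\det\neq0$ on a uniform neighborhood $\mathcal U_d$ of $Y_u$, and on $\mathcal U_d$ we have $\rank K_O\ge d+1>d$.

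The probabilistic statement then follows from support properties of the record law. The law of $Y$ is equivalent, by Girsanov, to that of a Brownian motion. The drift $\sqrt{2\eta\kappa}\langle X\rangle_t$ is adapted with finite moments on $[0,T]$, and by Novikov-type localization the two laws are mutually absolutely continuous. Brownian motion charges every uniform neighborhood of a continuous path (Stroock--Varadhan support theorem), so $\mathbb P(\mathcal U_d)>0$.

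For the filter statements, suppose a robust $d$-dimensional $C^1$ filter is exact on a neighborhood of $u$. Theorem~\ref{thm:general-rank-obstruction}, applied at each record in that neighborhood, gives $\rank K_O\le d$ there, contradicting the nonzero $(d+1)$-minor at $u$ itself. If instead the filter is exact almost surely, it agrees with the quantum map on a full-measure set, which is dense in $\mathcal U_d$. Both maps extend continuously by robustness, so they agree on all of $\mathcal U_d$. Therefore their derivatives along smooth perturbations agree, and the same rank contradiction applies.

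I expect the main obstacle to be the continuity step: controlling the unbounded generator $A_u$ on the Wigner space and justifying that the gauge-transformed propagator is continuous in $Y$. I would first handle this on the invariant class of Gaussian-times-polynomial functions, where everything reduces to finite moment ODEs.
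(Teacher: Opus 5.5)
Your outline follows the paper's proof: the filter factorization rules out exactness near $u$, continuity plus full support upgrades almost-sure exactness to exactness on an open set, and a nonzero $(d+1)\times(d+1)$ minor persists on a uniform neighborhood of positive probability. You also try to justify the support and continuity facts that the paper only asserts. Two steps need repair.

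\textbf{The lemma is used in the wrong direction.} Lemma~\ref{lem:supp-minor-test} only gives ``nonzero derivative minors of every size $\Rightarrow$ infinite rank.'' It does not let you extract a nonzero $(d+1)\times(d+1)$ derivative minor at the split from the hypothesis $\rank K_O=\infty$. For a merely smooth kernel that converse is false. Multiply an infinite-rank kernel by $e^{-1/(t-\tau)}e^{-1/(\tau-s)}$: the rank stays infinite because the factors are nonzero a.e., but every corner derivative vanishes.

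The repair is shorter than your route. Because the kernel is continuous on $I_+\times I_-$, suppose every $(d+1)\times(d+1)$ evaluation minor vanished. Choose $t_1,\dots,t_q\in I_+$ and $s_1,\dots,s_q\in I_-$ with $q\le d$ maximal such that $M=[K_O(t_i,s_j)]_{i,j=1}^{q}$ is invertible. The vanishing bordered determinants then give $K_O(t,s)=\sum_{i,j}K_O(t,s_j)(M^{-1})_{ji}K_O(t_i,s)$ for all $(t,s)$, so $\rank K_O\le d$. Infinite rank therefore yields a nonzero evaluation minor directly, and the Taylor/finite-difference step becomes unnecessary.

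\textbf{The fallback for continuity fails.} Gaussian-times-polynomial functions are preserved by finitely many applications of polynomial differential operators, which is all the corner derivatives in the classification use. They are not preserved by the finite-time propagator $\Phi_u(t,s)$ of a non-quadratic Hamiltonian. There the moment equations do not close; that non-closure is exactly what the paper studies. So this class cannot serve as an invariant setting reducing everything to finite moment ODEs.

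Note also what the paper's robustness convention actually provides: continuity of the record-to-output map. Persistence of the minor needs more, namely continuity of the derivative kernel in the uniform record topology. The paper asserts that persistence without proof. You should either state it explicitly as part of the hypothesis or prove it with genuine estimates for the gauge-transformed (Clark) generator on a suitable function space.

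\textbf{A simpler support argument.} Use the reference measure, under which $Y$ is a Brownian motion and the physical law has density $\operatorname{Tr}\widetilde\rho_T>0$. This avoids verifying any Novikov-type condition.
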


\begin{proof}
The derivative of a $d$-variable filter factors through its $d$-dimensional
state at the split and therefore has rank at most $d$.  It cannot equal an
infinite-rank derivative.  If two continuous record-to-output maps agreed
almost surely but differed at one record, continuity would give an open set
on which they differ, contradicting full support.  Finally, a nonzero
$(d+1)\times(d+1)$ response minor remains nonzero under a sufficiently small
change of the base record.  Its neighborhood has positive probability.
\end{proof}

\section{Proof of the one-mode classification}
\label{sec:supp-classification}

We first show how an increasing sequence of polynomial terms in the columns
$G_j$ produces response minors.  We then construct such a sequence for every
polynomial Hamiltonian outside ${\mathscr F}_X$.

\subsection{From polynomial columns to nonzero minors}

Use the displaced coherent Wigner function
\begin{equation}
 W_0(x,p)=\pi^{-1}e^{-(x-x_0)^2-(p-p_0)^2}.
 \label{eq:supp-coherent}
\end{equation}
Write
\begin{equation}
 B=x+\frac12\partial_x,
 \qquad
 C=x-\frac12\partial_x.
 \label{eq:supp-BC}
\end{equation}
Then ${\cal M}_\eta=cB$, $[B,C]=1$, and
\begin{equation}
 BW_0=x_0W_0,
 \qquad
 CW_0=(2x-x_0)W_0.
 \label{eq:supp-BC-action}
\end{equation}
The correction in Eq.~\eqref{eq:supp-correction} is $-\eta\kappa B^2$.
Since
\begin{equation}
 [B^2,C^r]=2rC^{r-1}B+r(r-1)C^{r-2},
 \label{eq:supp-B2C}
\end{equation}
it lowers the highest $C$ power and cannot cancel a term selected by that
power.

The coefficient of $u_0^i$ in the future test $F_i$ comes from choosing the
measurement part of $A$ at all $i$ steps.  It is therefore
$({\cal M}_\eta^\dagger)^i(o-\langle O\rangle_0)$; every other term has a
lower power of $u_0$.  Its generating action is
\begin{equation}
 e^{\lambda{\cal M}_\eta^\dagger}f(x,p)
 =e^{c\lambda x-c^2\lambda^2/4}
 f\left(x-\frac{c\lambda}{2},p\right).
 \label{eq:supp-measurement-flow}
\end{equation}
Since
\begin{equation}
 {\cal M}_\eta(x^ap^bW_0)
 =cx_0x^ap^bW_0+\frac{ca}{2}x^{a-1}p^bW_0,
 \label{eq:supp-measurement-lowering}
\end{equation}
the scalar part $cx_0$ produces only combinations of lower future rows.
After eliminating those rows, the leading coefficient is represented by
$({\cal M}_\eta^\dagger-cx_0)^i$.  For $o=x^mp^n$ and a column whose leading
term is $x^ap^bW_0$, this gives
\begin{align}
 &\left\langle
 ({\cal M}_\eta^\dagger-cx_0)^i
 (o-\langle O\rangle_0),x^ap^bW_0
 \right\rangle
 \notag\\
 &\quad=
 \left(\frac c2\right)^i\frac{a!}{(a-i)!}
 \notag\\
 &\qquad\times\Bigl[
 \langle P^{n+b}\rangle_0\langle X^{m+a-i}\rangle_0
 \notag\\
 &\qquad\quad
 -\langle X^m\rangle_0\langle P^n\rangle_0
  \langle P^b\rangle_0\langle X^{a-i}\rangle_0
 \Bigr]
 \label{eq:supp-common-pairing}
\end{align}
for $i\le a$, and the expression is zero for $i>a$.
All one-variable brackets in these Gaussian moment formulas denote
expectations in the preparation $W_0$.

\begin{lemma}[Gaussian extraction]
\label{lem:supp-gaussian-extraction}
Suppose an infinite subsequence of past columns has nonzero leading terms
$x^{a_r}p^{b_r}W_0$, where $a_r\to\infty$, and no lower term reaches the
displayed power at the same response order.
Then every observable $x^mp^n$ with $m\ge1$ gives nonzero response minors of
arbitrary size for generic $x_0,p_0,u_0$.  The same holds for $p^n$, $n\ge1$,
if $b_r>0$ along an infinite subsequence.
\end{lemma}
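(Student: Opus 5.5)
The plan is to build, for each size $N$, a lower-triangular $N\times N$ response matrix. The rows are future tests $F_{i}$ with $u_0$-leading part $({\cal M}_\eta^\dagger-cx_0)^{i}(o-\langle O\rangle_0)$. The columns are past directions $G_{j_r}$ whose leading terms are $x^{a_r}p^{b_r}W_0$. Nonvanishing of its determinant then follows from the pairing formula \eqref{eq:supp-common-pairing}, and Lemma~\ref{lem:supp-minor-test} turns this into infinite rank.

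\textbf{Choice of rows and columns.} Pass to a subsequence with $a_1<a_2<\cdots<a_N$ strictly increasing. Choose row $r$ to have future order $i_r=a_r$, and work with the coefficient of $u_0^{i_r}$ in $F_{i_r}$. By \eqref{eq:supp-common-pairing} the pairing vanishes whenever $i>a$. So pairing row $r$ against column $c<r$ gives zero, because $a_c<a_r=i_r$. This holds for the leading term of column $c$, and also for its lower terms, which have $x$-degree below $a_c$ by hypothesis (the assumption that no lower term reaches the displayed power at the same order). The selected matrix, restricted to the top power of $u_0$ in each row, is therefore triangular. Its determinant is a polynomial in $u_0$ whose top coefficient equals the product of the diagonal entries.

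\textbf{Diagonal entries.} At $i=a$, formula \eqref{eq:supp-common-pairing} gives
\[
\left(\tfrac c2\right)^{a_r}a_r!\,\bigl[\langle P^{n+b_r}\rangle_0\langle X^{m}\rangle_0-\langle X^m\rangle_0\langle P^n\rangle_0\langle P^{b_r}\rangle_0\bigr].
\]
For the coherent state, $\langle X^m\rangle_0$ is a nonzero polynomial in $x_0$ (monic of degree $m$), so the issue is whether $\langle P^{n+b}\rangle_0-\langle P^n\rangle_0\langle P^b\rangle_0$ vanishes.

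This bracket has a covariance-like form, and it vanishes identically when $b=0$ or $n=0$. That is the main obstacle, and it is exactly where $m\ge1$ must be used. When $b_r=0$ (or $n=0$), I will move one unit of $x$-degree to avoid the degenerate centered pairing. Concretely, I will take $i_r=a_r-1$ instead. The formula then yields $\langle X^{m+1}\rangle_0\langle P^{n+b}\rangle_0-\langle X^m\rangle_0\langle P^n\rangle_0\langle P^b\rangle_0\langle X\rangle_0$. Its top-degree part in $x_0$ cancels, but the next term, $\tfrac{m(m+1)}{4}x_0^{m-1}\langle P^{n}\rangle_0\langle P^{b}\rangle_0$ (from the variance), survives and is a nonzero polynomial in $(x_0,p_0)$. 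Using $i_r=a_r-1$ keeps triangularity, provided the gaps satisfy $a_{r+1}\ge a_r+2$, which I arrange by thinning the subsequence.

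When $b>0$ and $n>0$, the $P$-bracket is a nonzero polynomial in $p_0$ (for example, its $p_0$-degree-$(n+b-2)$ coefficient is $\binom{n+b}{2}/2-\binom n2/2-\binom b2/2=nb/2\ne0$). For the case $o=p^n$ with $m=0$, the $b=0$ columns give zero outright, which is why the lemma requires $b_r>0$ infinitely often. Restricting to that subsequence, the bracket with $b_r,n\ge1$ is nonzero as above.

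\textbf{Genericity and cleanup.} The determinant is a polynomial in $u_0$ whose top coefficient is a nonzero polynomial in $(x_0,p_0)$, so it is nonzero for generic $(x_0,p_0,u_0)$. Two kinds of lower-order contamination remain: the scalar $cx_0$ parts and the normalization/centering rows. These are handled by the row elimination already described, by adjoining the lower rows $i<i_r$. Since this elimination is unimodular, it leaves the determinant unchanged. Hamiltonian, damping, and $B^2$ contributions carry lower powers of $u_0$ and so do not affect the top coefficient. Taking $N$ arbitrary completes the argument.
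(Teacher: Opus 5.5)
Your proof follows the paper's argument: the rows are the top-$u_0$ parts reduced to $({\cal M}_\eta^\dagger-cx_0)^{i}(o-\langle O\rangle_0)$, triangularity comes from the vanishing of Eq.~\eqref{eq:supp-common-pairing} for $i>a$ along a thinned subsequence, the parameters $(x_0,p_0,u_0)$ are chosen generically, and the $m=0$ case is handled identically through the coefficient $nb_r/2$. The only difference is the row shift for $m\ge1$. The paper uses $i_r=a_r-m$ with gaps $a_{r+1}-a_r>m$, so every pivot has top $p_0$-coefficient $\langle X^{2m}\rangle_0-\langle X^m\rangle_0^2>0$ and needs neither a case split nor a generic $x_0$. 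Your split between $i_r=a_r$ and $i_r=a_r-1$ also works, and in fact $i_r=a_r-1$ alone would suffice, because its pivot has top $p_0$-coefficient $\langle X^{m+1}\rangle_0-\langle X\rangle_0\langle X^m\rangle_0=\tfrac m2\langle X^{m-1}\rangle_0$. That value also corrects a slip in your degenerate case: the surviving term is $\tfrac m2 x_0^{m-1}\langle P^n\rangle_0\langle P^b\rangle_0$, not $\tfrac{m(m+1)}{4}x_0^{m-1}\langle P^n\rangle_0\langle P^b\rangle_0$, which does not affect nonvanishing.
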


\begin{proof}
Choose a subsequence with $a_{r+1}-a_r>m$ and take the future derivative
order $i_r=a_r-m$.  Equation~\eqref{eq:supp-common-pairing} makes the leading
matrix triangular.  Its $r$th diagonal entry is a nonzero constant times
\begin{equation}
 \langle P^{n+b_r}\rangle_0\langle X^{2m}\rangle_0
 -\langle X^m\rangle_0^2
  \langle P^n\rangle_0\langle P^{b_r}\rangle_0.
 \label{eq:supp-x-pivot}
\end{equation}
As a polynomial in $p_0$, its highest coefficient is
$\langle X^{2m}\rangle_0-\langle X^m\rangle_0^2>0$.
A single generic $p_0$ avoids the roots of all countably many pivots.

For $m=0$, take $i_r=a_r$.  The diagonal becomes a nonzero constant times
\begin{equation}
 \langle P^{n+b_r}\rangle_0
 -\langle P^n\rangle_0\langle P^{b_r}\rangle_0.
 \label{eq:supp-p-pivot}
\end{equation}
For $b_r>0$, this is not the zero polynomial in $p_0$ because its leading
term is $(nb_r/2)p_0^{n+b_r-2}$.  Finally, each selected determinant is a
polynomial in $u_0$, and the triangular product above is its nonzero highest
coefficient.  One generic constant $u_0$ avoids the roots for every finite
minor.
\end{proof}

\subsection{Single nonlinear monomials}
\label{subsec:supp-monomials}

Consider first
\begin{equation}
 h(x,p)=\gamma x^\ell p^k,
 \qquad
 \gamma\ne0,
 \qquad
 \ell+k\ge3,
 \qquad
 k\ge1.
 \label{eq:supp-monomial}
\end{equation}
Its first-order Hamiltonian action is
\begin{equation}
 {\cal H}^{(1)}_{\ell k}
 =\gamma\left(
 \ell x^{\ell-1}p^k\partial_p
 -kx^\ell p^{k-1}\partial_x
 \right),
 \label{eq:supp-classical-field}
\end{equation}
and its first commutator with the measurement operator is
\begin{align}
 \frac{[{\cal H}^{(1)}_{\ell k},{\cal M}_\eta]}{c\gamma}
 ={}&-kx^\ell p^{k-1}
 +\frac{k\ell}{2}x^{\ell-1}p^{k-1}\partial_x
 \notag\\
 &-\frac{\ell(\ell-1)}2x^{\ell-2}p^k\partial_p.
 \label{eq:supp-first-response}
\end{align}

For $\ell\ge2$, begin with the multiplication term in
Eq.~\eqref{eq:supp-first-response}.  Since
\begin{equation}
 [{\cal H}^{(1)}_{\ell k},x^ap^b]
 =\gamma(\ell b-ka)x^{a+\ell-1}p^{b+k-1},
 \label{eq:supp-multiplication-commutator}
\end{equation}
The leading coefficient vanishes exactly when the exponent pairs
\((a,b)\) and \((\ell,k)\) are proportional.  The construction therefore
does not require consecutive response derivatives: when a selected exponent
pair is proportional to \((\ell,k)\), we pass to the first past direction or
future test whose leading exponent is not proportional to it.

There is a second, independent reason to select derivative orders.  For the
centered coherent preparation, a Gaussian moment of \(x^rp^q\) vanishes when
either \(r\) or \(q\) is odd.  Such a parity zero does not terminate the
degree-raising sequence.  The controlled generator contains
\(u_0{\cal M}_\eta\), and the leading action of
\({\cal M}_\eta^\dagger\) adds one power of the measured coordinate.  With
\(u_0\ne0\), the derivative sequences therefore contain choices with the
opposite \(x\) parity.  Equivalently, a generic displacement removes the
centered parity symmetry.  The proof below chooses whichever row or column
has a nonvanishing Gaussian pairing.

The selected powers after $j$ Hamiltonian steps are
\begin{equation}
 a_j=1+j(\ell-1),
 \qquad
 b_j=j(k-1),
 \label{eq:supp-multiplication-powers}
\end{equation}
The first coefficient is $-c\gamma k$, and the $j$th Hamiltonian step
multiplies it by $\gamma[j(k-\ell)-k]$.  The sequence therefore continues
forever unless
\begin{equation}
 k>\ell,
 \qquad
 \frac{k}{k-\ell}\in\mathbb N.
 \label{eq:supp-resonance}
\end{equation}

When Eq.~\eqref{eq:supp-resonance} holds, use instead the derivative terms in
Eq.~\eqref{eq:supp-first-response}.  Their leading seed is
\begin{equation}
 x^{\ell-2}p^{k-1}
 \left(kx\partial_x-(\ell-1)p\partial_p\right).
\end{equation}
If the two derivative coefficients are placed in a vector $\bm v_r$, the
next Hamiltonian commutator gives
\begin{equation}
 \bm v_{r+1}
 =
 \begin{pmatrix}
  k+(k-1)\ell+r(k-\ell)&k(k-1)\\
  -\ell(\ell-1)&k(2-\ell)+r(k-\ell)
 \end{pmatrix}\bm v_r,
 \label{eq:supp-derivative-recurrence}
\end{equation}
starting from $\bm v_0=(k,-(\ell-1))^{\mathsf T}$.  The determinant of the
matrix in Eq.~\eqref{eq:supp-derivative-recurrence} is
\begin{equation}
 [2k-\ell+r(k-\ell)][2k-\ell+(r-1)(k-\ell)].
 \label{eq:supp-derivative-determinant}
\end{equation}
On the exceptional set $k>\ell$, both factors in the last line are positive.
Thus the derivative sequence is nonzero at every order precisely when the
multiplication sequence can stop.  Acting on $W_0$ changes its derivatives
into polynomials times $W_0$, with unbounded $x$ degree.

The displayed terms remain leading terms of the exact Moyal evolution.  A
higher Moyal term differentiates the Hamiltonian at least three times and
therefore loses at least two polynomial powers relative to
Eq.~\eqref{eq:supp-classical-field}.  Damping, diffusion, and measurement are
lower when terms are ordered first by momentum degree and then by $x$ degree.
On the $k=1$ edge, Eq.~\eqref{eq:supp-B2C} gives the needed ordering instead.
The sequence therefore satisfies Lemma~\ref{lem:supp-gaussian-extraction}.

There is one extra step when $k=1$ and the observable is $P^n$.  The past
columns then have no increasing $p$ power.  Insert one Hamiltonian action in
the future.  Since
\begin{equation}
 {\cal H}^{(1)\dagger}_{\ell 1}(x^ap^n)
 =\gamma(a-n\ell)x^{a+\ell-1}p^n,
\end{equation}
summing its $r+1$ possible positions among $r$ measurement actions leaves
\begin{equation}
 -(r+1)n\ell\gamma(cu_0)^r x^{r+\ell-1}p^n.
 \label{eq:supp-xlP-row}
\end{equation}
After the same lower-row elimination and removal of the nonzero generic
$p$-moment factors, matching its \(x\) degree to the past columns gives the
Gaussian moment matrix
\begin{equation}
 \left[\langle X^{a_r+a_s}\rangle_0\right]_{r,s=1}^N.
 \label{eq:supp-x-gauss}
\end{equation}
This is the Gram matrix of the distinct monomials \(X^{a_r}\) in
\(L^2(W_0)\).  They are linearly independent because a nonzero polynomial
cannot vanish almost surely under a nondegenerate Gaussian distribution.
Hence the matrix is positive definite.

For a pure momentum Hamiltonian $h=\gamma p^k$, $k\ge3$, momentum diffusion
supplies the missing step.  Its coefficient
$\kappa(2\bar n+1)/4$ is strictly positive.
For $Q_j=p^{a_j}\partial_x^j$, the leading double commutator is
\begin{align}
 &\left[{\cal H}^{(1)}_{0k},
 \left[\frac{\kappa(2\bar n+1)}4\partial_p^2,Q_j\right]\right]
 \notag\\
 &\quad=\frac{\kappa(2\bar n+1)}2\gamma k(k-1)a_j
 p^{a_j+k-3}\partial_x^{j+1}
 +\text{lower terms}.
 \label{eq:supp-pure-p-cycle}
\end{align}
Starting with $a_0=k-1$ gives
\begin{equation}
 Q_j=p^{k-1+j(k-3)}\partial_x^j,
 \qquad j=0,1,\ldots,
 \label{eq:supp-pure-p-sequence}
\end{equation}
with a nonzero coefficient at every step.  Since
$\partial_x^jW_0=(-2)^jx^jW_0$ plus lower $x$ powers, both requirements of
Lemma~\ref{lem:supp-gaussian-extraction} hold for every observable.

It remains to treat $h=\gamma xp^k$, $k\ge2$.  Equation
\eqref{eq:supp-first-response} contains $p^{k-1}\partial_x$, and
\begin{equation}
 [{\cal H}^{(1)}_{1k},p^b\partial_x]
 =\gamma(b+k)p^{b+k-1}\partial_x.
\end{equation}
The past columns therefore contain
\begin{equation}
 p^{j(k-1)}\partial_xW_0,
 \qquad j=1,2,\ldots,
 \label{eq:supp-xpk-columns}
\end{equation}
with nonzero coefficients.  These columns grow in $p$ rather than $x$, so we
choose future rows with both measurement and Hamiltonian actions.  Let
$S_{R,q}(m,n)$ be the leading coefficient after $R$ measurement actions and
$q$ Hamiltonian actions, with the common factor $c^R\gamma^q$ removed.
Sorting the words by their last action gives
\begin{align}
 S_{R,q}={}&S_{R-1,q}
 +[k(m+R)-n-(q-1)(k-1)]S_{R,q-1},
 \notag\\
 &S_{R,0}=1.
 \label{eq:supp-word-recurrence}
\end{align}
Induction and discrete summation give
\begin{equation}
 S_{R,q}=\frac{k^q}{2^q q!}R^{2q}+O(R^{2q-1}).
 \label{eq:supp-word-leading}
\end{equation}
Thus, for every $q$, arbitrarily large $R$ give a nonzero row whose leading
monomial is $x^{m+R}p^{n+q(k-1)}$.

After pairing these rows with Eq.~\eqref{eq:supp-xpk-columns}, the remaining
matrix has entries
\begin{equation}
 \langle P^{a_r+b_s}\rangle_0,
 \qquad
 a_1<\cdots<a_N,
 \qquad
 b_1<\cdots<b_N.
 \label{eq:supp-general-moment-matrix}
\end{equation}
Its determinant is not the zero polynomial in $p_0$.  To see this, write
$P=p_0+Z$ and write the determinant as the integral of the two determinants
formed from the powers $a_r$ and $b_s$.  In the highest nonzero power of
$p_0$, each of these two determinants reduces to a nonzero constant times the
Vandermonde determinant in the integration variables.  Their product is
therefore a positive constant times
\begin{equation}
 \int_{\mathbb R^N}
 \prod_{i<j}(z_j-z_i)^2
 \prod_{i=1}^N e^{-z_i^2}\,dz_i>0.
 \label{eq:supp-vandermonde}
\end{equation}
Hence a generic displacement makes every required determinant nonzero.  This
completes all single-monomial cases.

\subsection{General polynomial Hamiltonians}
\label{subsec:supp-general-polynomial}

Write the polynomial as
\begin{equation}
 h(x,p)=\sum_{k=0}^{k_{\max}} f_k(x)p^k.
 \label{eq:supp-collect-p}
\end{equation}
For a differential monomial
$x^ap^b\partial_x^r\partial_p^s$, order its terms by
\begin{equation}
 (b-s,a-r)
 \label{eq:supp-weight}
\end{equation}
in lexicographic order.  The first-order Hamiltonian term produced by
$x^\ell p^k$ has order $(k-1,\ell-1)$, while every higher Moyal term is
strictly lower.  Pure $x$ potentials, damping, diffusion, and measurement are
also lower than a selected term with positive momentum order.  For $k=1$,
order instead by momentum degree and then by the $C$ power in
Eq.~\eqref{eq:supp-BC}; Eq.~\eqref{eq:supp-B2C} again places the correction
below the selected term.

It follows that the single-monomial calculation survives unchanged if the
Hamiltonian has

\begin{itemize}
\item $k_{\max}\ge3$;
\item $k_{\max}=2$ and $f_2$ is nonconstant; or
\item no $p^2$ term and $\deg f_1\ge2$.
\end{itemize}

Indeed, in each line the highest $x$ power in the stated coefficient is the
unique largest nonlinear term in the relevant ordering.  Replacing one copy
of it by any other Hamiltonian term lowers the final order, so no collection
of lower terms can cancel the leading response sequence.

If none of these cases applies and the Hamiltonian is outside
${\mathscr F}_X$, it must have the form
\begin{equation}
 h(x,p)=\frac\beta2p^2+F(x)p+V(x),
 \qquad \beta\ne0.
 \label{eq:supp-shear-family}
\end{equation}
Set
\begin{equation}
 \begin{aligned}
 s(x)&=\frac{F(x)}\beta,
 &q&=p+s(x),\\
 U(x)&=V(x)-\frac{F(x)^2}{2\beta}.
 \end{aligned}
 \label{eq:supp-shear-change}
\end{equation}
Then
\begin{equation}
 h(x,p)=\frac\beta2q^2+U(x),
 \qquad \{x,q\}=1.
 \label{eq:supp-shear-form}
\end{equation}
This change leaves the measured coordinate $x$ fixed.  When $s$ is
nonlinear, we use it only to rewrite the differential generator; we do not
assume that the Moyal product is unchanged by the transformation.

If $s$ is affine, it preserves polynomial degree.  When $U$ is at most
quadratic the original Hamiltonian is Gaussian.  Otherwise, if
$U(x)=u_dx^d+\cdots$ with $d\ge3$, the two leading actions are
\begin{equation}
 x^ap^b\longmapsto x^{a-1}p^{b+1},
 \qquad
 x^ap^b\longmapsto x^{a+d-1}p^{b-1}.
\end{equation}
Alternating them gives
\begin{equation}
 x^{1+r(d-2)}\longmapsto x^{r(d-2)}p
 \longmapsto x^{1+(r+1)(d-2)},
 \label{eq:supp-kinetic-potential}
\end{equation}
with nonzero coefficients.  Higher Moyal terms lose at least two powers of
$x$, so they cannot cancel this sequence.  Lemma
\ref{lem:supp-gaussian-extraction} treats observables with positive $x$
degree.  For $P^n$, one potential action in the future gives a positive
Gaussian moment matrix of the form Eq.~\eqref{eq:supp-x-gauss}.

Suppose now that $s$ is nonlinear.  At fixed $q$,
\begin{equation}
 \partial_x\big|_p=\partial_x\big|_q+s'(x)\partial_q,
 \qquad
 \partial_p=\partial_q.
 \label{eq:supp-shear-derivatives}
\end{equation}
The $x$-dependent coefficient of $\partial_q$ in the full controlled
generator is
\begin{align}
 f_{u_0}(x)={}&U'(x)
 +\frac\kappa2[xs'(x)-s(x)]
 \notag\\
 &-\eta\kappa\left[xs'(x)+\frac14s''(x)\right]
 +\frac{cu_0}{2}s'(x).
 \label{eq:supp-shear-force}
\end{align}
The two terms relevant at highest degree are then
\begin{equation}
 -\beta q\partial_x
 \qquad\text{and}\qquad
 f_{u_0}(x)\partial_q.
 \label{eq:supp-shear-leading-actions}
\end{equation}
If $d=\deg f_{u_0}\ge2$, write its leading term as
$\lambda x^d$, with $\lambda\ne0$.  At leading order, the two actions are
\begin{equation}
 x^aq^b\longmapsto-\beta a x^{a-1}q^{b+1},
 \qquad
 x^aq^b\longmapsto\lambda b x^{a+d}q^{b-1}.
 \label{eq:supp-shear-degree-actions}
\end{equation}
The first raises the $q$ degree by one and lowers the $x$ degree by one; the
second reverses that change and adds $d$ powers of $x$.  A word that begins with a pure $x$ term,
returns to zero $q$ degree after $2r$ steps, and reaches the largest possible
$x$ degree must contain $r$ actions of each type and can never apply the
force at zero $q$ degree.  Every such word has sign
$(-\beta\lambda)^r$, multiplied by positive integer degree factors.  The
alternating word is one such nonzero word, so the terms at the largest degree
cannot cancel.  The resulting columns contain nonzero leading terms
$x^{1+r(d-1)}W_0$ for arbitrarily large $r$.

If $\deg s\ge3$, a generic $u_0$ makes
$\deg f_{u_0}\ge2$.  If the record-independent nonlinear part in
Eq.~\eqref{eq:supp-shear-force} vanishes, the term $u_0s'$ supplies it.
Only
\begin{equation}
 s(x)=\alpha x^2+ax+b,
 \qquad \alpha\ne0,
 \label{eq:supp-quadratic-shear}
\end{equation}
can avoid this step.  It does so only when $U$ is tuned so that
\begin{equation}
 U'''=\kappa\alpha(4\eta-1).
 \label{eq:supp-tuned-U}
\end{equation}

We finish this last case directly.  Put
\begin{equation}
 d_x=\frac{\kappa(2\bar n+1-\eta)}4.
 \label{eq:supp-diffusions}
\end{equation}
At the differential orders needed below, the relevant parts of the generator
are
\begin{align}
 &-\beta q\,\partial_x\big|_q+U'(x)\partial_p,
 \notag\\
 &d_x(\partial_x\big|_p)^2
 +\frac{\kappa(2\bar n+1)}4\partial_p^2,
 \notag\\
 &\frac{\beta\alpha}{4}\partial_p^2\partial_x\big|_q
 +\text{constant}\times\partial_p^3.
 \label{eq:supp-tuned-operators}
\end{align}
Commuting the last line with the derivative part of the measurement operator
produces the nonzero starting term
\begin{equation}
 G_1=-\frac{c\beta\alpha^2}{4}\partial_p^3W_0
 +\text{lower terms}.
 \label{eq:supp-tuned-start}
\end{equation}

When $d_x>0$, the unique highest three-step cycle is
\begin{equation}
 \partial_p^n
 \longmapsto(\partial_x\big|_q)\partial_p^{n-1}
 \longmapsto(\partial_x\big|_p)\partial_p^n
 \longmapsto\partial_p^{n+3}.
 \label{eq:supp-tuned-cycle}
\end{equation}
The three arrows come from the kinetic term, the transformed $x$ diffusion,
and the third-order Moyal term, respectively.
Its coefficient is $2nd_x\beta^2\alpha^3\ne0$.  Repetition therefore gives
$G_{1+3r}$ a nonzero leading term $\partial_p^{3(r+1)}W_0$.
When $d_x=0$, necessarily $(\bar n,\eta)=(0,1)$, and
Eq.~\eqref{eq:supp-tuned-U} gives $U'''=3\kappa\alpha\ne0$.  The replacement
cycle
\begin{equation}
 \partial_p^n
 \longmapsto(\partial_x\big|_q)\partial_p^{n-1}
 \longmapsto x\partial_p^n
 \longmapsto\partial_p^{n+2}
\label{eq:supp-tuned-ideal-cycle}
\end{equation}
Here the middle arrow comes from the cubic part of $U$, and the final arrow
again comes from the third-order Moyal term.
The cycle has coefficient $n\beta^2\alpha U'''/4\ne0$.  Hence in this case
$G_{1+3r}$ has a nonzero leading term $\partial_p^{3+2r}W_0$.

Let $M_r$ denote the $p$-derivative order in the $r$th selected column.
Every observable sees either sequence.  For $x^mp^n$ with $n\ge1$, choose
future kinetic actions to raise the $p$ degree to $M_r$ and measurement
actions to supply the required $x$ degree.  The associated falling-factorial
coefficient is nonzero.  Integration by parts then gives a
diagonal term
\begin{equation}
 (-1)^{M_r}M_r!\langle X^m\rangle_0,
 \label{eq:supp-tuned-p-pairing}
\end{equation}
which is nonzero for generic $x_0$.  For $n=0$ and $m\ge1$, the centered
readout is treated in the same way, and the diagonal pairing is
\begin{equation}
 (-1)^{M_r}M_r!
 \left(\langle X^{2m}\rangle_0-\langle X^m\rangle_0^2\right)\ne0.
 \label{eq:supp-tuned-x-pairing}
\end{equation}
Ordering the columns by increasing $M_r$ makes the matrix triangular, since
a polynomial of $p$ degree $M_r$ pairs to zero with
$\partial_p^{M_s}W_0$ when $M_s>M_r$.

We have now covered every polynomial outside ${\mathscr F}_X$: either a
largest nonlinear monomial produces one of the sequences in
Sec.~\ref{subsec:supp-monomials}, or completing the square gives one of the
three shear cases above.  Lemma~\ref{lem:supp-minor-test} then gives infinite
response rank for the coherent preparation.

\subsection{Almost every Gaussian preparation}

A one-mode Gaussian preparation is determined by its displacement $\bm\mu$
and positive covariance matrix $\Sigma$, with
$\det\Sigma\ge1/4$.  Fix one of the response determinants constructed above.
Each of its entries is a finite Gaussian moment after finitely many
polynomial differentiations.  Its determinant is therefore a real-analytic
function of $\bm\mu$ and $\Sigma$.

For each size $N$, that function is not identically zero because the coherent
calculation above supplies a nonzero value.  Its zero set consequently has
measure zero.  The union of these zero sets over
$N=1,2,\ldots$ still has measure zero.  Outside that union, nonzero minors of
every size exist, so the response rank is infinite.  Restricting the same
determinants to the displacement and squeezing parameters gives the same
argument for pure Gaussian states, because the coherent witnesses are pure.
This proves the ``almost every Gaussian'' statement in the main theorem.
Together with Proposition~\ref{prop:supp-records}, it also proves the claimed
no-go result for robust stochastic filters.

\section{Measurement and network extensions}
\label{sec:supp-extensions}

Only the steps not already contained in the one-mode proof are needed here.
A phase-space rotation sends $(X,P)$ to $(X_\theta,P_\theta)$, preserves the
Weyl ordering and Gaussian preparations, and sends the measurement operator to
the same expression in $X_\theta$.  Applying the proof in the rotated
coordinates gives the class ${\mathscr F}_\theta$ stated in the main text.

For heterodyne monitoring, vary one real record channel while holding the
other fixed.  If the two-input response operator had finite rank, its
restriction to this one-channel input would also have finite rank.  A
nonlinear Hamiltonian cannot be in the aligned CMM class for both conjugate
quadratures.  At least one restriction is therefore covered by the one-mode
proof.  The fixed channel contributes only Gaussian measurement and
correction terms, which are lower in the orderings used above and cannot
cancel the selected leading terms.  This leaves only the Gaussian class as
the finite heterodyne class.

The proof also covers every $0<\eta\le1$ and $\bar n\ge0$.  The record
operator is only rescaled by $\sqrt\eta$, while the momentum-diffusion
coefficient $\kappa(2\bar n+1)/4$ is always positive.  The only point where
$d_x$ vanishes is $(\bar n,\eta)=(0,1)$, which is the second tuned-shear cycle
in Eq.~\eqref{eq:supp-tuned-ideal-cycle}.

Finally, suppose an obstructed mode is coupled polynomially to a finite
network with coupling parameters $\bm\lambda$.  At
$\bm\lambda=0$, every selected response determinant reduces to the nonzero
one-mode determinant.  For fixed $N$, the determinant is analytic in
$\bm\lambda$ and thus remains nonzero near zero and away from a measure-zero
set.  Taking the countable intersection over $N$ gives the generic
persistence statement in the main text.  This argument does not classify
nonlinearities involving only the other modes, and it does not exclude
exceptional finely tuned coupling values.

\section{Numerical implementation and robustness tests}
\label{sec:supp-numerics}

\subsection{Response kernels and spectral calculations}

All calculations set \(\kappa=1\) and measure time in units of
\(1/\kappa\).  The homodyne state trajectories and their tangent equations
are propagated with the same second-order split-step update.  The tangent
update is obtained by differentiating the discrete state update, so the
sampled response matrix is the derivative of the numerical record-to-output
map rather than a separately discretized continuum expression.  A
perturbation at the past-grid point \(s_j\) initializes
\begin{equation}
 \delta\widetilde\rho^{(j)}(s_j^+)
 ={\cal M}_\eta\widetilde\rho(s_j),
\end{equation}
and propagating this state gives
\begin{equation}
 K_{ij}
 =
 \frac{
 \left\langle
 o-\langle O\rangle_{t_i},\delta W^{(j)}(t_i)
 \right\rangle}
 {\langle1,W(t_i)\rangle}.
 \label{eq:supp-sampled-kernel}
\end{equation}

To approximate the continuous map
\(K_O:L^2(I_-)\rightarrow L^2(I_+)\), the sampled kernel is weighted before
taking its matrix SVD:
\begin{equation}
 \widetilde K_{ij}
 =
 \sqrt{w_i^+}\,K_{ij}\sqrt{w_j^-},
 \label{eq:supp-weighted-kernel}
\end{equation}
where \(w_i^+\) and \(w_j^-\) are the quadrature weights on the future and
past grids.  Without these factors, the singular values would depend on the
grid normalization rather than approximate those of the continuous
\(L^2\) operator.

We validate the tangent construction by adding
\(dY_t\mapsto dY_t\pm\varepsilon h(t)dt\) to the same physical record and
comparing \(K_Oh\) with the central difference
\begin{equation}
 D_\varepsilon h
 =
 \frac{
 \langle O\rangle[Y+\varepsilon h]
 -\langle O\rangle[Y-\varepsilon h]}
 {2\varepsilon}.
 \label{eq:supp-central-difference}
\end{equation}
The discrepancy has the expected \(O(\varepsilon^2)\) behavior and reaches
\(2.74\times10^{-9}\) at \(\varepsilon=10^{-4}\) in the test shown in the
main text.

For the spectral comparison, the past and future windows both have length
\(0.55\), separated by a gap of \(0.06\), with time step \(0.0025\) and Fock
cutoff \(30\).  The common displacement is
\(\alpha=1.4+0.55i\).  The Gaussian control uses \(H=0.85N\), \(O=X\), and
\(D(\alpha)S(0.6)\lvert0\rangle\); the CMM control uses
\(H=0.10X^3\) and \(O=P\).  The nonlinear curves use
\(K/\kappa=3.2\) for Kerr and \(g/\kappa=0.22\) for Duffing, with \(O=X\)
and coherent initial states.  The Kerr ensemble contains 24 independently
generated physical homodyne records.  Time-step and cutoff refinement
separate the nonterminating Kerr and Duffing tails from the second-order
discretization floor of the rank-two CMM control.

\subsection{Finite-amplitude response models}

For each physical record, the retained dimension is
\(d=d_{10^{-3}}\).  The quadratic and cubic decoders are fitted at radii
\(0.05\), \(0.15\), \(0.3\), and \(0.6\) using both signs of every
perturbation.  Kerr uses 48 independently drawn training directions and
Duffing uses 56.  The linear coefficient is fixed by the response SVD, while
the quadratic and combined quadratic--cubic coefficients are obtained by
ridge least squares.  All reported errors use independently drawn test
directions.  These directions lie in the space spanned by the first ten
singular functions, while the decoder receives only its first \(d\)
coordinates; the test therefore includes the effect of omitted response
directions whenever \(d<10\).

The Kerr test uses \(K/\kappa=0.8\), \(O=X\), 24 physical records, coherent
amplitude \(2\), Fock cutoff \(24\), and time step \(0.01\).  The past, gap,
and future lengths are \(0.8\), \(0.08\), and \(0.8\).  The Duffing test uses
\(g/\kappa=0.22\), 20 records, coherent amplitude \(1.4+0.55i\), and Fock
cutoff \(30\), with the same time grid and window lengths.  The omitted-mode
test uses radius \(r=0.03\) and perturbs each model along its first omitted
singular function.

\subsection{Poisson--Charlier and Fock calculations}

The Poisson--Charlier coefficient equations and the direct Fock amplitudes
are driven by identical complex heterodyne records and advanced with the same
normalized Milstein discretization.  Model-dependent matrices are
precomputed and excluded from the reported online propagation times.  In the
fixed-\(N_0=9\) convergence test, \(K/\kappa=0.4\),
\(\kappa T=0.8\), the time step is \(0.002\), and the Fock reference cutoff
is \(94\).  The Charlier weight uses
\(\mu=N_0e^{-\kappa T}\).  Forty physical records are used.  Repeating the
order-32 calculation at time step \(0.001\) gives a 90th-percentile endpoint
state error of \(3.63\times10^{-8}\), below the plotted-grid result.

The response-dimension comparison uses \(N_0=9\),
\(K/\kappa=0.4\), \(T=0.75\), time step \(0.005\), and a 43-component
complex Fock reference.  The two real heterodyne response blocks are
concatenated before taking the weighted SVD.  The Charlier response matrix is
obtained by differentiating its truncated coefficient evolution with respect
to the same two record channels.

For the occupation scan, \(T=0.5\) and the parameters are chosen so that
\begin{equation}
 \frac{KN_0(1-e^{-\kappa T})}{\kappa}=1.
\end{equation}
At each occupation, the reported state dimension is the smallest tested
value whose 90th-percentile phase-aligned endpoint error over 20 physical
records is at most \(10^{-6}\).  Timing values are medians of five blocks
with five repeats per block on the same machine.  The two methods process
identical batches of 128 records, and reusable operator construction is
excluded.  These timings compare the implementations under matched
conditions and are not intended as hardware-independent absolute costs.

The response spectra were checked under time-step and Fock-cutoff refinement.
Pure-state and density-matrix response kernels agree to relative error below
\(10^{-8}\).  Full parameters, fixed seeds, record-level results, refinement
data, timing measurements, and figure scripts are included in the deposited
reproducibility archive cited in the main text.

\subsection{Additional robustness tests}

Figure~\ref{fig:supp-robustness} adds four checks not displayed in the main
text: exact CMM time-step convergence, dependence on the chosen future
observable, inefficient monitoring, and time-step convergence of the Kerr
spectrum.

\begin{figure*}[t]
 \centering
 \includegraphics[width=\textwidth]{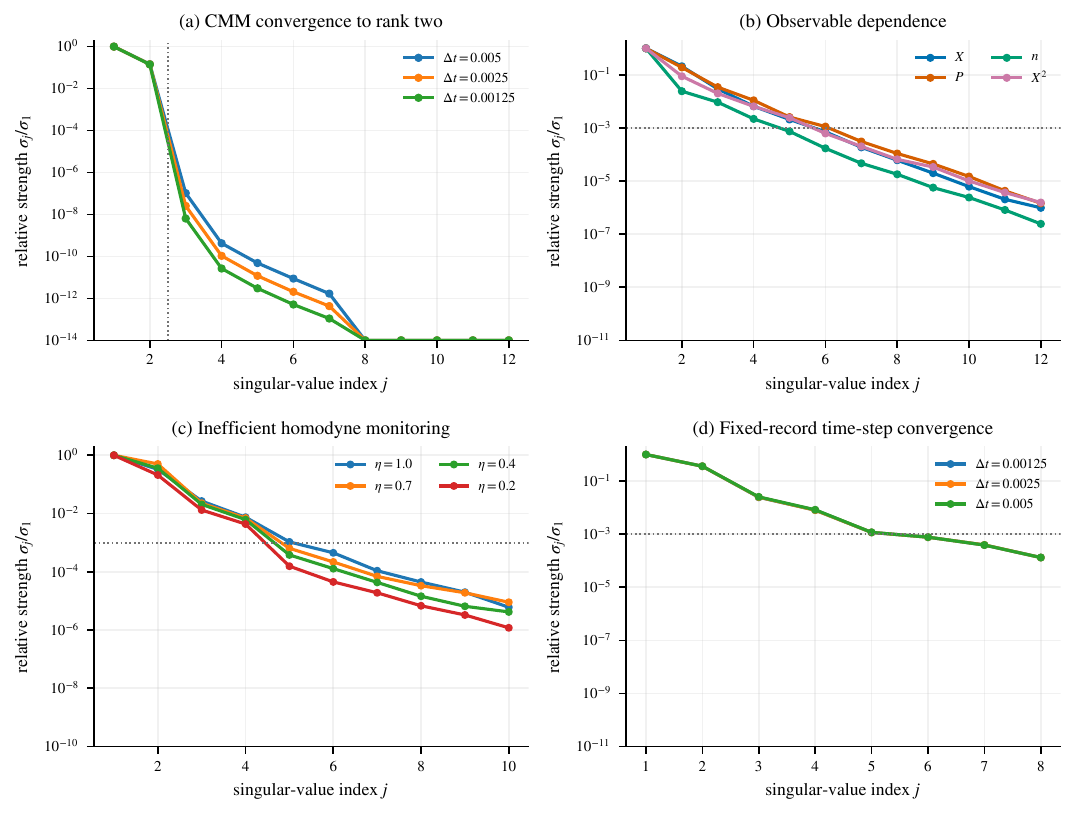}
 \caption{Additional checks of the observable-response spectrum.
 (a) For the cubic CMM control, the first two singular values remain finite
 while the apparent third decreases by approximately four under each
 time-step halving, as expected for an exact rank-two response computed with
 a second-order step.  (b) Median Kerr spectra for the future observables
 $X$, $P$, $N$, and $X^2$ over 16 shared physical records.  The decay depends
 on the observable because $K_O$ retains only the state directions visible to
 that output.  (c) Median Kerr spectra from 12 density-matrix records at each
 efficiency.  The low-dimensional decay remains visible from $\eta=1$ to
 $\eta=0.2$.  (d) A fixed-record Kerr spectrum under successive time-step
 halvings.  The nonzero tail converges rather than retreating to the numerical
 floor.  Every panel uses separated past and future windows and the same
 quadrature weighting as the main figures.}
 \label{fig:supp-robustness}
\end{figure*}

For panel (c), the median value of $d_{10^{-3}}$ stays between four and
$4.5$ across the displayed efficiencies.  At $\eta=0.4$, halving the time
step leaves $d_{10^{-3}}=4$ and changes the first three nontrivial normalized
singular values by $6.3\times10^{-4}$, $1.1\times10^{-4}$, and
$2.9\times10^{-5}$.  Together with panels (a) and (d), this separates a true
finite-rank termination from a converged but decaying infinite-rank tail.

\end{document}